\documentclass[11pt]{article}
\pagestyle{plain}

\usepackage[margin=1in]{geometry}
\usepackage{amsmath,amsfonts,amssymb,amsthm}
\usepackage{graphicx}
\usepackage{enumerate}
\usepackage{bbm}
\usepackage{verbatim}
\usepackage{hyperref,color}
\usepackage[capitalize,nameinlink]{cleveref}
\usepackage[dvipsnames]{xcolor}
\hypersetup{
	colorlinks=true,
	pdfpagemode=UseNone,
    citecolor=OliveGreen,
    linkcolor=NavyBlue,
    urlcolor=Magenta,
	pdfstartview=FitW
}
\usepackage{appendix}
\crefname{appsec}{Appendix}{Appendices}
\usepackage{tikz}

\theoremstyle{plain}
\newtheorem{theorem}{Theorem}[section]
\newtheorem{proposition}[theorem]{Proposition}
\newtheorem{lemma}[theorem]{Lemma}

\theoremstyle{definition}
\newtheorem{definition}[theorem]{Definition}

\newtheorem*{assumption*}{Assumption}

\theoremstyle{remark}

\crefname{lemma}{Lemma}{Lemmas}
\crefname{theorem}{Theorem}{Theorems}
\crefname{definition}{Definition}{Definitions}
\crefname{fact}{Fact}{Facts}
\crefname{claim}{Claim}{Claims}
\crefname{proposition}{Proposition}{Propositions}

\newcommand{\E}{\mathbb{E}}

\DeclareMathOperator*{\argmax}{arg\,max}
\newcommand{\norm}[1]{\left\lVert #1 \right\rVert}

\newcommand{\ceil}[1]{\left\lceil #1 \right\rceil}

\newcommand{\poly}{\mathrm{poly}}
\newcommand{\dist}{\mathrm{dist}}

\newcommand{\eps}{\varepsilon}

\newcommand{\ball}{\mathsf{B}}

\newcommand{\infmax}{\textsc{Inf-Max}}

\newcommand{\N}{\mathbb{N}}

\newcommand{\R}{\mathbb{R}}

\renewcommand{\Pr}{\mathbb{P}}

\newcommand{\f}{f}

\newcommand{\MM}{\mathcal{M}}

\begin{document}
	
\title{Influence Maximization in Ising Models}
\author{
Zongchen Chen\thanks{Department of Computer Science and Engineering, University at Buffalo, \texttt{zchen83@buffalo.edu}. Research supported by EM's Simons Investigator award (622132).}
\and
Elchanan Mossel\thanks{Department of Mathematics, MIT, \texttt{elmos@mit.edu}. Research supported by the Vannevar Bush Faculty Fellowship ONR-N00014-20-1-2826, the NSF award CCF 1918421, and the Simons Investigator award (622132).}
}
\date{\today}

\maketitle

\begin{abstract}
Given a complex high-dimensional distribution over $\{\pm 1\}^n$, 
what is the best way to increase the expected number of $+1$'s by controlling the values of only a small number of variables?
Such a problem is known as influence maximization and has been widely studied in social networks, biology, and computer science.
In this paper, we consider influence maximization on the Ising model which is a prototypical example of undirected graphical models and has wide applications in many real-world problems.
We establish a sharp computational phase transition for influence maximization on sparse Ising models under a bounded budget: 
In the high-temperature regime, we give a linear-time algorithm for finding a small subset of variables and their values which achieve nearly optimal influence; 
In the low-temperature regime, we show that the influence maximization problem cannot be solved in polynomial time under commonly-believed complexity assumption.
The critical temperature coincides with the tree uniqueness/non-uniqueness threshold for Ising models which is also a critical point for other computational problems including approximate sampling and counting.


\end{abstract}

\section{Introduction}

Let $\mu$ be a distribution supported on $\{\pm 1\}^V$ where $V$ is a ground set of size $n$, and let $k \in \N^+$ be an integer corresponding to a budget.
We consider the following version of the influence maximization problem which asks to find a subset $S \subseteq V$ of size at most $k$ and a partial assignment $\sigma_S \in \{\pm 1\}^S$ which maximizes the expectation of $\sum_{v \in V} X_v$ conditioned on variables in $S$ receiving values specified by $\sigma_S$. 
In other words, we want to solve the following combinatorial optimization problem:
\begin{align}
\max_{\substack{S\subseteq V,\, |S| \le k \\ \sigma_S \in \{\pm 1\}^S}} 
\left\{ \E_\mu \left[ \sum_{v \in V} X_v \,\Bigg\vert\, X_S = \sigma_S \right] \right\}. \label{eq:inf-max}
\end{align}
Influence maximization is an important problem especially in the study of social networks and also has a vast number of applications in other areas \cite{KKT03,kempe2005influential}.

The problem of influence maximization has been extensively studied 
both theoretically and in more applied work. 
However, the families of distributions for which it was analyzed is somewhat limited.
The theoretical foundations for the model were introduced in~\cite{KKT03}
in terms of dynamical model where agents are infected if a function of their infected neighborhood surpasses a certain threshold. Algorithmic results and computational hardness are both stated in terms of properties of these threshold functions.
While the models introduced and analyzed in~\cite{KKT03,kempe2005influential,MR10} allow for and vastly generalize standard infections models, they do not apply to other standard models of correlated opinions. 

The main interest in our paper is in Ising models, 
which is one of the simplest and most popular graphical models for modeling the joint distribution of correlated discrete random variables. 
The Ising model was originally defined as statistical physics models, and nowadays they are widely used to model social networks, computer networks, and biological systems, see e.g.~\cite{LYS10,montanari2010spread,antonio2010ising,Majewski2001TheIM,lara2019analogy,lipowski2022ising}.

Consider a graph $G=(V,E)$, and let $\beta,h \in \R$. 
In the Ising distribution on $G$ parameterized by $\beta,h$, every configuration $\sigma \in \{\pm 1\}^V$ is assigned with a probability density
\[
\mu(\sigma) \propto \exp\left( \beta \sum_{uv \in E} \sigma_u \sigma_v + h \sum_{v \in V} \sigma_v \right).
\]
Here, $\beta$ is the inverse temperature describing the interaction between adjacent vertices. In particular if $\beta >0$ then neighboring vertices are more likely to receive the same value and the model is called \emph{ferromagnetic}; meanwhile, if $\beta < 0$ they could become repulsive to each other and the model is called \emph{antiferromagnetic}. 
The parameter $h$ is the external field of the system describing the bias of variables from outside. 
In general, every edge could have a distinct inverse temperature and every vertex a distinct external field; we refer to \cref{subsec:Ising} for this more general definition.

The problem of influence maximization for the Ising model was studied before in some special settings. 
It was shown by Bresler, Koehler, and Moitra~\cite{BrKoMo:19} using the GHS inequality that for the ferromagnetic Ising model, the influence maximization problem for the equilibrium measure is submodular and therefore obtains a $1-1/e$ approximation factor. 
More recently, a preprint by Chin, Moitra, Mossel and Sandon~\cite{chin2023power} shows that for very high temperature (small $\beta$) ferromagnetic Ising models with fixed parameter $\beta$, the influence maximization problem is approximately solved by the highest degree nodes. 
See also \cite{LYS10} for applications in social networks of influence maximization on Ising models.

While prior works provided some interesting algorithms for special cases, much remains unknown. First, many of the most natural models are not ferromagnetic. Second, we may be interested in an approximation factor better than $1-1/e$. 
Finally and importantly, we would like to understand the computational hardness of the problem. 

For Ising models, there exists a critical temperature $\beta_c$ which characterizes phase transitions of the model. 
Such a critical point $\beta_c$ depends on the maximum degree of the graph and is called the tree uniqueness/non-uniqueness threshold since it characterizes whether there exists a unique Gibbs measure for the Ising model on infinite regular trees. 
More importantly, the threshold $\beta_c$ pinpoints whether or not the model exhibits correlation decay \cite{Wei06,LLY13} or spectral independence \cite{ALO20,CLV20}, which are crucial properties for guaranteeing rapid mixing of natural Markov chains for sampling such as Glauber dynamics and polynomial-time algorithms for estimating the partition function.



We show that the critical temperature $\beta_c$ also pinpoints a computational phase transition for the influence maximization problem on sparse Ising models.
In fact, we consider a more general version of influence maximization where we want to maximize the influence on an arbitrary linear function of $X_v$'s under a bounded budget; see \cref{subsec:inf-max} for formal definitions. 
\begin{theorem}[Informal version of \cref{thm:max-inf,thm:hardness}]
Consider Ising models on bounded-degree graphs and let $k \in \N^+$ be a constant for the budget.
\begin{itemize}
\item If $|\beta| < \beta_c$, then one can find $(S,\sigma_S)$ whose influence is $\eps$-close to the optimal value within time $O(n) \cdot \poly(1/\eps)$;
\item If $|\beta| > \beta_c$, then there is no $\poly(n,1/\eps)$-time algorithm for influence maximization.
\end{itemize}

\end{theorem}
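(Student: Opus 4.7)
I would handle the two halves of the dichotomy separately, using the fact that $\beta_c$ is exactly the tree uniqueness threshold: on one side we have exponential decay of correlations, which drives a local algorithm; on the other side, standard computational phase transitions make even marginal queries hard.

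\textbf{Upper bound ($|\beta|<\beta_c$).} In the uniqueness regime, Ising models on bounded-degree graphs exhibit exponential decay of correlations: for any pinning and any vertices $u,v$, the influence of $X_v$ on $\E[X_u\mid\,\cdot\,]$ is bounded by $C e^{-c\,\dist(u,v)}$. I would use this in two ways. First, for any fixed candidate $(S,\sigma_S)$ the objective $\E_\mu[\sum_v X_v\mid X_S=\sigma_S]$ differs from $\E_\mu[\sum_v X_v]$ only through vertices near $S$; truncating to a ball of radius $r=O(\log(k/\eps))$ around $S$ introduces additive error at most $\eps$, and each truncated contribution can be computed in $O(1)$ time by a correlation-decay recursion on the self-avoiding walk tree. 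Second, to avoid the naive $\binom{n}{k}2^k=O(n^k)$ enumeration and achieve true $O(n)\cdot\poly(1/\eps)$ runtime, I would define for every $v$ and every sign $s\in\{\pm1\}$ a local score $\mathrm{score}(v,s)$ equal to $\E_\mu[\sum_u X_u\mid X_v=s]-\E_\mu[\sum_u X_u]$ truncated to a constant-radius neighborhood of $v$; the full score table takes $O(n)$ time to build. By correlation decay, if the top-$k$ candidate vertices are pairwise far apart, their influences decouple and the greedy choice (top-$k$ vertices with their best signs) is $\eps$-optimal. If some top candidates are clustered inside a common constant-radius ball, brute-force enumeration inside that ball resolves the coupling without changing the asymptotic runtime.

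\textbf{Lower bound ($|\beta|>\beta_c$).} Here I would reduce from problems known to be hard in this regime, the natural candidate being approximate counting / sampling / marginal-sign estimation on $\Delta$-regular graphs, which is $\mathsf{NP}$-hard by the results of Sly and Sly--Sun in exactly the range $|\beta|>\beta_c$. The strategy is to construct, from a hard instance $G$, an enlarged graph $G'$ together with a budget $k$ such that any $(S,\sigma_S)$ achieving $\eps$-approximate optimum in \eqref{eq:inf-max} for $G'$ reveals the sign (or magnitude) of the magnetization on a designated gadget region, and hence solves the hard problem on $G$. For ferromagnetic $\beta>\beta_c$ an external-field / attached-clique gadget should suffice, since the optimal $(S,\sigma_S)$ will tilt the system into whichever extremal Gibbs state has larger magnetization, which is the quantity one is trying to approximate. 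For antiferromagnetic $\beta<-\beta_c$, a bipartite-cover or parity-based transformation converts the low-temperature antiferromagnet into the ferromagnetic setup so the same gadget applies.

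\textbf{Expected obstacle.} The algorithmic half is largely mechanical once correlation decay is invoked; the only subtlety is handling clusters of high-score vertices without losing more than $\eps$. The hardness half is the delicate part, because influence maximization is an optimization problem, so the reduction must be arranged so that the \emph{argmax} (not just the value of $\E_\mu[\sum_v X_v]$) carries the bit one needs. Designing a gadget whose optimizer encodes the hard instance robustly enough to tolerate $\eps$-approximation, and doing so uniformly for both $\beta>\beta_c$ and $\beta<-\beta_c$, is where I expect the main technical work to lie.
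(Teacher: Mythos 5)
Your high-level architecture for the algorithmic half matches the paper: truncate the global influence to a radius-$O(\log(1/\eps))$ ball using correlation decay, then optimize locally. But your local search has a genuine gap. You only build scores for \emph{singletons} $(v,s)$ and then take the top-$k$ far-apart vertices, falling back to brute force only when \emph{top-scoring} vertices cluster. The joint influence of a tight cluster $T$ of $j\le k$ vertices is not controlled by the individual influences of its members: telescoping gives $\Phi(T,\sigma_T)=\sum_i\bigl(\E[\,\cdot\mid X_{T_i}]-\E[\,\cdot\mid X_{T_{i-1}}]\bigr)$, and each increment is a \emph{conditional} influence, which (especially with signed weights $a$) can be large even when every unconditional single-vertex influence is near zero. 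So the optimum may live on a cluster none of whose members appears in your top-$k$ table, and your algorithm never examines it. The paper avoids this by enumerating \emph{all} connected (in $G^{\le 2r+1}$) subsets $T$ with $|T|\le k$ --- there are only $O(n)\cdot e^{O(r)}$ of them for constant $k$ --- computing an approximate local influence $\psi_T(\sigma_T)$ for each via an FPTAS for high-temperature marginals, and then solving a budgeted maximum-weight independent set problem on the "cluster graph" (two clusters conflict iff they are within distance $2r+1$). That last combinatorial step, which you do not address, is itself nontrivial to do in linear time; the paper shows one may restrict to the $k(D+1)$ heaviest clusters of each cost class before brute-forcing.

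On the hardness half, your sketch contains a misconception that would sink the ferromagnetic branch as you describe it: approximate counting and marginal estimation for \emph{ferromagnetic} Ising models is not $\mathsf{NP}$-hard (Jerrum--Sinclair gives an FPRAS at all temperatures), so there is no hard ferromagnetic counting instance to reduce from, and the bipartite "parity flip" you propose maps the hard antiferromagnets onto tractable ferromagnets only on bipartite graphs, which is not where the hardness lives. The paper's theorem is for the \emph{family} of models with some $|\beta_{uv}|>\beta_c$, and its hardness is carried entirely by the antiferromagnetic instances of Sly, Sly--Sun, and Galanis--\v{S}tefankovi\'c--Vigoda (turned into hardness of single-vertex marginals via self-reducibility). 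The gadget you are looking for is also simpler than an attached clique: add $k$ \emph{isolated} vertices with a tunable external field $x$, put weight $1$ on them and on the target vertex $v$ and weight $0$ elsewhere; then the optimum is either "pin all $k$ isolated vertices to $+$" (value $k(1-\tanh x)$) or "pin $v$ and $k-1$ of them" (value $1-\E[X_v]+(k-1)(1-\tanh x)$), so the returned argmax decides whether $\E[X_v]\gtrless\tanh x$ up to $\eps$, and binary search over $x$ recovers the marginal. You correctly identified that the argmax must carry the bit, but without this concrete comparison gadget the reduction is not yet a proof.
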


One important feature of our algorithmic result is that the running time of the algorithm is linear in $n$. 
Naively, one can easily obtain a polynomial-time algorithm by enumerating all possible $(S,\sigma_S)$ and find their corresponding influences. Since in the high-temperature regime we are able to approximately sample from the distribution or estimate the marginals in polynomial time, such a brute force algorithm runs in polynomial time; however, the exponent in $n$ is a large constant depending on $k$. 
Our algorithm has the advantage of being linear-time, assuming we have a constant budget $k = O(1)$.

To obtain a linear-time algorithm, we utilize the decay of correlation property and the spectral independence technique in a novel way. 
In the high-temperature regime (i.e.~$|\beta| < \beta_c$), the correlation/influence between a vertex $v$ and a subset $S \subseteq V$ of vertices is known to decay exponentially fast with their graph distance $\dist_G(v,S)$ \cite{Wei06,LLY13,CLV20}; see \cref{subsec:unique} for details. 
The key in our approach is to approximate the global influence of $(S,\sigma_S)$ on the whole vertex set $V$ by a local influence on only vertices sufficiently close to $S$ when correlation decay and spectral independence hold; see \cref{prop:approx-local}. 
The proof of the algorithmic result is provided in \cref{sec:alg}.

Meanwhile, in the low-temperature regime (i.e.~$|\beta| > \beta_c$) correlations or influences between two vertices can be non-vanishing even when their distance grows. For this reason simple Markov chain algorithms for sampling such as Glauber dynamics are known to be exponentially slow on such family, and our algorithmic approach fails for the same reason. In fact, it was known that approximate sampling and counting is $\mathsf{NP}$-hard in the antiferromagnetic case, i.e.~when $\beta < -\beta_c$ \cite{Sly10,SS14,GSV16}.
We establish hardness of influence maximization by giving a simple reduction from approximating the partition function of Ising models. The proof can be found in \cref{sec:hardness}.

\section{Preliminaries}


Suppose $G=(V,E)$ is a graph.
For two vertices $u,v \in V$, let $\dist_G(u,v)$ denote their graph distance in $G$. 
For any $u \in V$ and any $r>0$, let $\ball(u,r) = \{v\in V: \dist_G(u,v) \le r\}$ be the ball of radius $r$ around $u$. 
Further, for any $S \subseteq V$ let $\ball(S,r) = \bigcup_{u \in S} \ball(u,r)$. 

For $r > 0$, let $G^{\le r}$ denote the graph with the same vertex set $V$ and two vertices $u,v$ adjacent iff $\dist_G(u,v) \le r$. 
For $S\subseteq V$, let $G[S]$ be the subgraph induced on $S$.

\subsection{Ising model}
\label{subsec:Ising}

Suppose $G=(V,E)$ is a graph. 
Let $\beta \in \R^E$ be a vector of edge couplings 
and $h \in \R^V$ be a vector of external fields. 
The Gibbs distribution $\mu = \mu_{G,\beta,h}$ of the Ising model $(G,\beta,h)$ is given by
\begin{equation}\label{eq:Gibbs}
\mu(\sigma) := \frac{1}{Z} \exp\left( \sum_{uv \in E} \beta_{uv} \sigma_u \sigma_v + \sum_{v \in V} h_v \sigma_v \right), \qquad \forall \sigma \in \{\pm 1\}^V
\end{equation}
where the partition function $Z = Z_{G,\beta,h}$ is defined by
\[
Z = \sum_{\sigma \in \{\pm 1\}^V} \exp\left( \sum_{uv \in E} \beta_{uv} \sigma_u \sigma_v + \sum_{v \in V} h_v \sigma_v \right).
\]

For an integer $\Delta \ge 3$ and a real $\gamma>0$, let $\MM(\Delta,\gamma)$ be the family of all Ising models $(G,\beta,h)$ satisfying:
\begin{enumerate}
\item The graph $G$ has maximum degree at most $\Delta$;
\item For all $uv \in E$ it holds $(\Delta-1) (\tanh|\beta_{uv}|) \le \gamma$.
\end{enumerate}

We remark that for the family $\MM(\Delta,\gamma)$ every edge coupling can be either ferromagnetic (i.e.~$\beta_{uv} > 0$) or antiferromagnetic (i.e.~$\beta_{uv} < 0$).

The critical temperature is given by $\beta_c(\Delta) = \mathrm{arctanh}(1/(\Delta-1))$.
Hence, for any Ising model from the family $\MM(\Delta,\gamma)$ where $\gamma < 1$, every edge coupling satisfies $|\beta_{uv}| < \beta_c(\Delta)$.

\subsection{Tree uniqueness, strong spatial mixing, total influence decay}
\label{subsec:unique}

In the high-temperature regime, strong spatial mixing (correlation decay) was known for the family $\MM(\Delta,\gamma)$ for any $\gamma<1$ \cite{Wei06,SST14,LLY13}. 
Recently, \cite{CLV20} established $\ell_\infty$-spectral independence by showing the exponential decay of total influences via Weitz's self-avoiding tree approach \cite{Wei06}.

\begin{lemma}[\cite{Wei06,SST14,LLY13,CLV20}]
\label{lem:SSM-TID}
For any $\Delta \ge 3$ and $\delta \in (0,1)$, there exists a constant $C = C(\Delta,\delta) > 0$ such that the following holds.
Consider an Ising model on a graph $G=(V,E)$ from the family $\MM(\Delta,1-\delta)$. 
Let $\Lambda \subseteq V$ and $\tau \in \{\pm 1\}^\Lambda$ be an arbitrary pinning.
\begin{itemize}
\item (Strong Spatial Mixing) For any $u \in V \setminus \Lambda$ and $L \in \N^+$, for any subset $W \subseteq V \setminus \Lambda \setminus\{u\}$ such that $\dist_G(u,W) \ge L$ and any two spin assignments $\sigma_W,\xi_W \in \{\pm 1\}^W$, we have
\[
\left| \Pr_{\mu^\tau}\left( X_u = + \mid X_W = \sigma_W \right) - \Pr_{\mu^\tau}\left( X_u = + \mid X_W = \xi_W \right) \right| 
\le C(1-\delta)^L.
\] 

\item (Total Influence Decay) For any $u \in V \setminus \Lambda$ and $L \in \N^+$, we have
\[
\sum_{v \in V \setminus \Lambda: \, \dist_G(u,v) \ge L} \left| \Pr_{\mu^\tau}(X_v = + \mid X_u = +) - \Pr_{\mu^\tau}(X_v = + \mid X_u = -) \right| 
\le C(1-\delta)^L.
\]
\end{itemize}
\end{lemma}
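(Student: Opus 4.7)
The plan is to reduce both items to an analysis of the Ising model on a tree via Weitz's self-avoiding walk (SAW) tree construction, and then to exploit an $\ell_\infty$-to-$\ell_1$ contraction of the tree marginal recursion, as in \cite{Wei06,LLY13,CLV20}. For any $u \in V$, let $T = T_{\mathrm{SAW}}(G,u)$ denote the SAW tree rooted at $u$, whose internal vertices encode self-avoiding walks out of $u$ in $G$. By Weitz's theorem, the conditional marginal $\Pr_{\mu^\tau}(X_u = +\mid X_W = \sigma_W)$ equals the root marginal of the Ising model on $T$ under the pinning on leaves induced by $\tau$, $\sigma_W$, and the walk-termination rule. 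Crucially, every graph vertex at distance $\ge L$ from $u$ in $G$ is represented only by tree vertices at depth $\ge L$ in $T$.

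The key technical ingredient is a tree contraction lemma. Let $R_v := \log\bigl(\Pr(X_v=+)/\Pr(X_v=-)\bigr)$. The Ising tree recursion computes $R_{\mathrm{parent}}$ as an explicit function of its children's log-ratios; after a change of variables via a monotone potential $\phi$ tailored to Ising (as in \cite{LLY13}), the Jacobian along each child edge has magnitude at most $\tanh|\beta_{uv}|$, so summing over at most $\Delta-1$ siblings yields a per-level contraction of rate at most $(\Delta-1)\tanh|\beta_{uv}| \le 1-\delta$ in the $\phi$-coordinate, uniformly over pinnings at the leaves and over external fields. I would pick $\phi$ so that $\phi(\pm\infty)$ is finite (e.g. $\phi(R)=\tanh(R/2)$), ensuring that any two pinnings produce an initial $\phi$-discrepancy bounded by an absolute constant.

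Strong spatial mixing then follows by comparing the root marginal in $T$ under the two boundary conditions induced by $\sigma_W$ and $\xi_W$. Since these differ only at tree vertices at depth $\ge L$, and the initial $\phi$-discrepancy at any single vertex is a constant depending only on $\Delta$ and $\delta$, iterating the per-level contraction $L$ times shrinks the root discrepancy to $C(1-\delta)^L$, which transfers to the same bound in probability coordinates. For total influence decay, I would instead fix $v$ with $\dist_G(u,v)\ge L$ and work in $T_{\mathrm{SAW}}(G,v)$, in which $u$ appears as one tree-vertex per self-avoiding walk from $v$ to $u$ in $G$. Chaining the single-edge Jacobian bound along the tree path from root to such a copy at depth $d$ yields a contribution of at most $O((\tanh|\beta|)^d)$ to the root influence. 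Summing the total-variation bound over all admissible $v$ is then equivalent to summing over self-avoiding walks out of $u$ of length $\ell \ge L$, and using the crude estimate $\Delta(\Delta-1)^{\ell-1}$ on the number of such walks of length $\ell$, the geometric series
\[
\sum_{\ell \ge L} \Delta(\Delta-1)^{\ell-1}(\tanh|\beta|)^\ell \;\le\; \frac{\Delta}{\Delta-1}\cdot\frac{(1-\delta)^L}{\delta}
\]
closes the argument.

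The main obstacle is the tree contraction lemma itself: exhibiting a potential $\phi$ for which the Ising recursion has explicit $\ell_1$-contraction rate $(\Delta-1)\tanh|\beta_{uv}|$, uniformly in external fields and in the boundary condition on the leaves. This is the standard LLY-style computation — one expresses the relevant Jacobian entries in closed form and then optimizes over admissible log-ratios and edge weights — but it is the only nontrivial ingredient; once this contraction is in hand, both strong spatial mixing and total influence decay follow by essentially routine level-by-level propagation on the SAW tree.
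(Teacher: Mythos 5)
The paper does not prove \cref{lem:SSM-TID} at all: it is imported verbatim from \cite{Wei06,SST14,LLY13,CLV20}, so there is no in-paper argument to compare against. Your sketch is a faithful reconstruction of how those references establish it — Weitz's SAW-tree reduction, a per-edge derivative bound of $\tanh|\beta_{uv}|$ on the tree recursion giving level-wise contraction rate $(\Delta-1)\tanh|\beta_{uv}|\le 1-\delta$, and for total influence decay a sum of path-products over self-avoiding walks of length $\ge L$, exactly as in \cite{CLV20}. Your geometric-series bound and the resulting constant $C=\Delta/((\Delta-1)\delta)$ are correct.

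One caveat: the specific fix you propose for the unbounded initial discrepancy — changing variables by $\phi(R)=\tanh(R/2)$ and asserting the Jacobian stays $\le\tanh|\beta|$ — does not follow automatically, since the transformed Jacobian picks up a factor $\phi'(R_{\mathrm{out}})/\phi'(R_{\mathrm{in}})$ that need not be $\le 1$; verifying a suitable potential is precisely the LLY computation you defer. For the Ising model there is a simpler route that avoids this entirely: the raw log-ratio recursion already satisfies $|\partial R_{\mathrm{parent}}/\partial R_{\mathrm{child}}|\le\tanh|\beta|$ pointwise, so the worst-case single-vertex influence of a tree vertex at depth $d$ on the root is at most the product of $\tanh|\beta_e|$ along the path, and strong spatial mixing then follows from total influence decay by telescoping over the vertices of $W$ one at a time (changing $\sigma_W$ to $\xi_W$ coordinate by coordinate). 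This reduces both bullet points to the same path-product sum you already compute, and sidesteps the potential-function obstacle you flag as the main gap.
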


\subsection{Influence maximization}
\label{subsec:inf-max}

Consider the Ising model on a graph $G = (V,E)$ with edge couplings $\beta \in \R^E$ and external fields $h \in \R^V$.
Let $a \in \R^V$ be a vector of vertex weights. 

\begin{definition}[Global Influence]
For a subset $S \subseteq V$ of vertices and a partial assignment $\sigma_S \in \{\pm 1\}^S$ on $S$, define the \emph{global} influence of $(S,\sigma_S)$ on the linear function $a \cdot X$ to be
\[
\Phi_{G,\beta,h,a}(S,\sigma_S) = \E \left[ \sum_{v \in V} a_v X_v \,\Bigg\vert\, X_S = \sigma_S \right] - \E \left[ \sum_{v \in V} a_v X_v \right]
\]
where $X \in \{\pm 1\}^V$ is sampled from the Ising model $(G,\beta,h)$.
\end{definition}


Let $k \in \N^+$ be an integer representing the budget.
In this paper, we consider the {$k$-\infmax} problem where we want to select a subset $S \subseteq V$ of size at most $k$ and a partial assignment $\sigma_S \in \{\pm 1\}^S$ which achieves almost the maximum global influence.
Formally, the problem {$k$-\infmax} is defined as follows.

\medskip
\noindent
$k$-\infmax

\noindent
\textbf{Input}: $(G,\beta,h)$ an Ising model; $a \in \R^V$ a vector of vertex weights; $\eps > 0$ an error parameter.

\noindent
\textbf{Output}: a subset $\hat{S} \subseteq V$ with $|\hat{S}| \le k$ and a partial assignment $\sigma_{\hat{S}} \in \{\pm 1\}^{\hat{S}}$ such that
\[
\Phi_{G,\beta,h,a}(\hat{S},\sigma_{\hat{S}}) 
\ge \max_{\substack{S\subseteq V,\, |S| \le k \\ \sigma_S \in \{\pm 1\}^S}} \left\{ \Phi_{G,\beta,h,a}(S,\sigma_S) \right\} - \eps.
\]

We say a weight vector $a \in \R^V$ is $L$-bounded if $\norm{a}_\infty \le L$, i.e., $|a_v| \le L$ for all $v \in V$.
We are interested in maximizing the global influence for bounded weights.
Since $\Phi_{G,\beta,h,ta}(S,\sigma_S) = t \cdot \Phi_{G,\beta,h,a}(S,\sigma_S)$, we may assume that $a$ is $1$-bounded.
Furthermore, we consider influence maximization with a constant budget, namely $k=O(1)$, which is already interesting and captures many real-world settings.
Our goal is to find an algorithm for {$k$-\infmax} with running time polynomial in $n$ and $1/\eps$, and understand the computational complexity of it.

\subsection{Main results}


\begin{theorem}[Algorithmic Result]
\label{thm:max-inf}
Suppose $\Delta \ge 3$ is an integer and $\delta \in (0,1)$ is a real. 
For any integer $k \in \N^+$, there exists a deterministic algorithm that solves {$k$-\infmax} 
for the family $\MM(\Delta,1-\delta)$ and $1$-bounded vertex weights 
with running time $O(n) \cdot (1/\eps)^{O(1)}$.
\end{theorem}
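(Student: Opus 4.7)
The plan rests on \cref{lem:SSM-TID}: total influence decay lets us replace the global objective by a local one on balls around $S$, and strong spatial mixing lets far-apart pieces of $S$ contribute additively. Together these reduce the naive $O(n^k)$ enumeration to listing $O(n)\cdot\poly(1/\eps)$ candidate clusters followed by a $\poly(1/\eps)$-time selection.

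\emph{Localization.} Set $r = \Theta_{\Delta,\delta}(\log(k/\eps))$ and define
\[
\Phi^{\mathrm{loc}}_r(S,\sigma_S) := \sum_{v \in \ball(S,r)} a_v\bigl(\E_\mu[X_v \mid X_S=\sigma_S]-\E_\mu[X_v]\bigr).
\]
I would telescope the conditioning on $S$ one vertex at a time: each pinning step changes the marginal at any $v \in V \setminus \ball(S,r)$ by at most the pairwise influence of the newly pinned vertex on $v$, and total influence decay bounds the sum of those terms over such $v$ by $C(1-\delta)^{r+1}$. Multiplying by $\|a\|_\infty$ and summing over the at most $k$ pinnings gives $|\Phi-\Phi^{\mathrm{loc}}_r| \le \eps/3$; this is presumably the content of \cref{prop:approx-local}. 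Each marginal inside $\Phi^{\mathrm{loc}}_r$ is evaluable to accuracy $\poly(\eps)$ in $\poly(1/\eps)$ time on a truncated Weitz self-avoiding-walk tree of depth $O(\log(1/\eps))$, with strong spatial mixing controlling the truncation error.

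\emph{Cluster decomposition and enumeration.} Any $S$ with $|S|\le k$ partitions into clusters $C_1,\dots,C_m$ taken as the connected components of $S$ in the graph $G^{\le 2r}$; these satisfy $\dist_G(C_i,C_j)>2r$ for $i\ne j$, and each has diameter $\le (2r+1)(k-1)$. For $v\in\ball(C_j,r)$ and $u\in S\setminus C_j$ one has $\dist_G(v,u)\ge r+1$, so a second telescoping, this time over the pinnings of $S\setminus C_j$, combined with total influence decay yields
\[
\Phi^{\mathrm{loc}}_r(S,\sigma_S) = \sum_{j=1}^m \Phi^{\mathrm{loc}}_r(C_j,\sigma_{C_j}) \pm \eps/3
\]
after enlarging $r$ by a $k$-dependent constant. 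It therefore suffices to enumerate, for each anchor $v\in V$, every $(C,\sigma_C)$ with $v$ lex-smallest in $C$, $|C|\le k$, and $C\subseteq\ball(v,(2r+1)k)$; this produces $O(n)\cdot\poly(1/\eps)$ candidates, each of whose local influence is $(\eps/3)$-approximable in $\poly(1/\eps)$ time by the previous step.

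\emph{Selection and main obstacle.} For each anchor $v$ and size $c\in\{1,\dots,k\}$, keep only the best size-$c$ candidate at $v$ and record its local influence as $\psi_c(v)$. Then for each composition $(c_1,\dots,c_m)$ of an integer $\le k$ (constantly many), a swap argument shows that in the optimum each anchor $v_j^*$ lies among the top $T = k\Delta^{2r} = \poly(1/\eps)$ entries of $\psi_{c_j}$: otherwise some heavier anchor sits outside the forbidden region $\bigcup_{i\ne j}\ball(v_i^*,2r)$ and could be swapped in. A constant-depth enumeration over top-$T$ tuples subject to pairwise distance $> 2r$ then finds the best composition-by-composition in $\poly(1/\eps)$ time, giving an $O(n)\cdot\poly(1/\eps)$ algorithm overall. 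I expect the main obstacle to be executing the cluster decomposition cleanly---so that the enumerated anchored clusters genuinely cover the true optimum up to $\eps$---and aligning the three sources of error (local truncation, cluster additivity, and marginal evaluation) so that the end-to-end guarantee is cleanly $\eps$-close.
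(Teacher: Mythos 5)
Your proposal follows the same overall architecture as the paper's proof --- localize the objective via total influence decay, split $S$ into well-separated clusters, enumerate anchored clusters, and select a compatible family by a swap/pruning argument --- so the core ideas are right. Two execution differences are worth noting. First, you define the local influence with the \emph{full-graph} measure $\mu$ truncated to $\ball(S,r)$, whereas the paper defines $\Phi^{(r)}$ on the Ising model \emph{induced} on $G[\ball(S,r)]$. The paper's choice makes the cluster decomposition (\cref{lem:decomp}) an exact identity, because the induced measure on disjoint balls is a product measure, and it makes each cluster weight computable by a deterministic FPTAS on a graph of size $e^{O(r)}$ with no further spatial-mixing argument; your choice forces both the decomposition and the marginal evaluation to be approximate. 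That is workable --- the paper performs the analogous comparison between the full model and the induced model in \cref{lem:infty-r} via strong spatial mixing --- but it costs you an extra error term and an extra invocation of \cref{lem:SSM-TID} that you should budget explicitly rather than fold into ``$\pm\eps/3$.''

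Second, the selection step is where your argument does not quite close as written. The paper casts selection as maximum-weight independent set with cost constraint $k$ on a cluster graph $H$ and prunes to the top $k(D+1)$ \emph{clusters} per cost class (\cref{lem:step3}); you prune \emph{anchors}. A cluster anchored at $v$ can extend to distance $\Theta(rk)$ from $v$, so the forbidden region for a replacement anchor must be $\bigcup_i \ball(v_i^*, O(rk))$, not $\bigcup_i \ball(v_i^*,2r)$; and after swapping in a heavier anchor $u$, the \emph{best} size-$c_j$ cluster at $u$ (which is the only one you retained) may itself come within distance $2r$ of the other chosen clusters, so ``keep only the best candidate per anchor'' can discard exactly the configurations the swap argument needs. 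Both issues are repairable --- enlarge the forbidden radius by a $k$-dependent factor (still $\poly(1/\eps)$ anchors since $k=O(1)$ and $r=O(\log(1/\eps))$) and either keep several candidates per anchor or add a conflict check in the final constant-depth enumeration --- but as stated the swap argument has a gap.
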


\begin{theorem}[Hardness Result]
\label{thm:hardness}
Suppose $\Delta \ge 3$ is an integer and $\delta > 0$ is a real. 
For any integer $k \in \N^+$, there is no randomized algorithm that can solve $k$-{\infmax} 
for the family $\MM(\Delta,1+\delta)$ and $1$-bounded vertex weights with probability at least $3/4$ in time $\poly(n,1/\eps)$, assuming $\mathsf{RP}\neq \mathsf{NP}$.
\end{theorem}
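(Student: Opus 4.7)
The plan is to reduce from the hardness of approximating the Ising partition function $Z_{G,\beta,h}$ in the non-uniqueness regime, established in \cite{Sly10,SS14,GSV16}: for any $\Delta \geq 3$ and $\delta > 0$, there is no FPRAS for $Z$ on a hard sub-family of antiferromagnetic models in $\MM(\Delta, 1+\delta)$ unless $\mathsf{RP} = \mathsf{NP}$. Assume for contradiction that $\AA$ is a $\poly(n, 1/\eps)$-time randomized algorithm solving $k$-\infmax on this family with success probability $3/4$. The plan is to use $\AA$ to build an FPRAS for $Z$ on a hard family, which yields the contradiction.

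The reduction factors through single-vertex marginal estimation via standard self-reducibility. Fixing any $x^* \in \{\pm 1\}^V$, one has $Z_{G,\beta,h} = \exp\bigl(\sum_{uv \in E} \beta_{uv} x^*_u x^*_v + \sum_v h_v x^*_v\bigr) / \mu(x^*)$ and $\mu(x^*) = \prod_{i=1}^n \mu(x_i^* \mid x_{[i-1]}^*)$ telescopes into $n$ conditional single-vertex marginals, each of which is an unconditional marginal in an Ising model obtained by absorbing the already-pinned values into the external fields of their neighbors. This absorption modifies only $h$, so every intermediate model stays in $\MM(\Delta, 1+\delta)$. Estimating each factor to additive precision $\eps/\poly(n)$ (amplifying $\AA$'s success probability via $O(\log(n/\eps))$ independent repetitions and union-bounding across all calls) then yields a multiplicative $(1\pm\eps)$-estimate of $Z$ in total randomized $\poly(n, 1/\eps)$ time.

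The core subroutine estimates $\E_\mu[X_v]$ to precision $\eps$ using $\AA$. For a threshold parameter $p \in (-1,1)$, augment the Ising model by adjoining $k$ fresh isolated vertices $w_1, \dots, w_k$, each with external field $\mathrm{arctanh}(p)$, so that $\E[X_{w_i}] = p$ independently of $X_V$. Set weights $a_v = a_{w_1} = \cdots = a_{w_k} = 1$ and $a_u = 0$ for every other $u$; this is $1$-bounded, and the augmented model stays in $\MM(\Delta, 1+\delta)$ since isolated vertices do not change any edge coupling or the maximum degree. By independence of the $w_i$, for any $(S,\sigma_S)$ with $|S|\le k$,
\[
\Phi(S, \sigma_S) \;=\; \bigl(\E[X_v \mid X_S = \sigma_S] - \E[X_v]\bigr) \;+\; \sum_{i:\, w_i \in S} (\sigma_{w_i} - p).
\]
The $|S|\le k$ optimum is attained at one of two canonical configurations: pin all $k$ dummies to $+1$, with value $k(1-p)$; or pin $v$ to $+1$ together with any $k-1$ dummies to $+1$, with value $(1-\E[X_v]) + (k-1)(1-p)$. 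The difference between these is exactly $p - \E[X_v]$, so whether $v$ appears in the set $\hat S$ returned by $\AA$ decides $\E[X_v] \lessgtr p$ up to additive slack $O(\eps)$: if $v \in \hat S$ then $\E[X_v] \le p + O(\eps)$, otherwise $\E[X_v] \ge p - O(\eps)$. Binary searching over $O(\log(1/\eps))$ values of $p$ then pinpoints $\E[X_v]$ to precision $\eps$.

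The main obstacle is ruling out spurious near-optima for the instance above: a set $\hat S$ consisting of $k-1$ dummies together with some vertex $u \ne v$ of the original graph could in principle achieve $\Phi$ close to the $v$-inclusive optimum, if pinning $u$ nearly forces $X_v = +1$. This is precluded on the hard family by the uniform bound $|\E[X_v \mid X_S = \sigma_S]| \le \tanh(\|h\|_\infty + \Delta\,\|\beta\|_\infty) = 1 - c$ for a constant $c = c(\Delta, \|\beta\|_\infty, \|h\|_\infty) > 0$, obtained by writing the conditional marginal as an expectation of $\tanh$ of an effective field of bounded magnitude. The $\mathsf{NP}$-hard instances in \cite{Sly10,SS14,GSV16} have bounded $\beta$ and $h$, and since the theorem addresses $\poly(n, 1/\eps)$-time algorithms (so hardness for any fixed small $\eps$ already suffices), taking $\eps < c$ eliminates the spurious case and makes the reduction go through.
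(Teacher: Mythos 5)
Your proposal is correct and follows essentially the same route as the paper: reduce from single-vertex marginal estimation (itself obtained from the hardness of approximating $Z$ via self-reducibility), adjoin $k$ isolated dummy vertices with a tunable field $\mathrm{arctanh}(p)$ and a weight vector supported on $v$ and the dummies, and binary-search over $p$. The one place you diverge is the decision rule, and it is worth noting that what you call ``the main obstacle'' is an artifact of your choice rather than an intrinsic difficulty. The paper decides according to whether the returned set $\hat S$ equals the all-dummy set $U$, not whether $v \in \hat S$: since the only weighted vertex of the original graph is $v$ and $\E[X_v \mid X_{S}=\sigma_S] \le 1$ trivially, \emph{every} $S \ne U$ (including your ``spurious'' sets containing some $u \ne v$) has influence at most $\Phi(W,\boldsymbol{+}_W)$, so the two-sided conclusions $\E[X_v] \gtrless \tanh x \mp \eps$ follow with no further argument and with no constraint on $\eps$. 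Your fix --- bounding conditional marginals by $\tanh(\norm{h}_\infty + \Delta\norm{\beta}_\infty) = 1-c$ and taking $\eps < c$ --- is valid on the hard sub-family (which does have bounded parameters, including after absorbing pinnings into fields), but it is an extra hypothesis you could have avoided entirely by switching to the $\hat S = U$ versus $\hat S \ne U$ test.
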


\section{Algorithmic Result}
\label{sec:alg}

We prove our algorithmic result \cref{thm:max-inf} by localizing the global influence of a subset $S$ of vertices to a ball around $S$.
For high-temperature Ising models, such local influence approximates the global influence effectively. 
Furthermore, one can approximately maximize the local influence by the local nature of the problem. 
Together this gives an approximation algorithm for global influence maximization.

To begin we first define the notion of local influences.

\begin{definition}[Local Influence]
Let $r \in \N^+$. For a subset $S \subseteq V$ of vertices and a partial assignment $\sigma_S \in \{\pm 1\}^S$ on $S$, define the \emph{local} influence of $(S,\sigma_S)$ on the linear function $a \cdot X$ to be
\[
\Phi_{G,\beta,h,a}^{(r)}(S,\sigma_S) = \E_{G[\ball(S,r)]} \left[ \sum_{v \in \ball(S,r)} a_v X_v \,\Bigg\vert\, X_S = \sigma_S \right] - \E_{G[\ball(S,r)]} \left[ \sum_{v \in \ball(S,r)} a_v X_v \right],
\]  
where $X \in \{\pm 1\}^V$ is sampled from the Ising model on the induced subgraph $G[\ball(S,r)]$ with $\beta,h$ restricted on it.
\end{definition}

Notice that, if $S=\{u,w\}$ and $\dist_G(u,w) > 2r+1$, then the induced subgraph $G[\ball(\{u,w\},r)]$ is the disjoint union of $G[\ball(u,r)]$ and $G[\ball(w,r)]$, and we can further decompose the local influence as
\begin{align*}
\Phi_{G,\beta,h,a}^{(r)}(S,\sigma_S) 
={}& \left( \E_{G[\ball(u,r)]} \left[ \sum_{v \in \ball(u,r)} a_v X_v \,\Bigg\vert\, X_u = \sigma_u \right] - \E_{G[\ball(u,r)]} \left[ \sum_{v \in \ball(u,r)} a_v X_v \right] \right) \\
&+ \left( \E_{G[\ball(w,r)]} \left[ \sum_{v \in \ball(w,r)} a_v X_v \,\Bigg\vert\, X_w = \sigma_w \right] - \E_{G[\ball(w,r)]} \left[ \sum_{v \in \ball(w,r)} a_v X_v \right] \right) \\
={}& \Phi_{G,\beta,h,a}^{(r)}(u,\sigma_u) + \Phi_{G,\beta,h,a}^{(r)}(w,\sigma_w).  
\end{align*}
Thus, we see that for local influence we are able to decompose it into clusters of vertices close to each other; more specifically the clusters are connected components of the induced subgraph $G[\ball(S,r)]$, see \cref{lem:decomp} for a precise statement.

We now present two main propositions for establishing \cref{thm:max-inf}. 
Fix $\Delta \ge 3$, $\delta \in (0,1)$ and $k \in \N^+$. In the propositions below $O(\cdot) = O_{\Delta,\delta,k}(\cdot)$ hides a constant depending on $\Delta,\delta,k$.

We first show that for high-temperature Ising models, the global influence is well-approximated by the local influence for sufficiently large radius $r \in \N^+$. 

\begin{proposition}\label{prop:approx-local}
Consider an Ising model on a graph $G=(V,E)$ from the family $\MM(\Delta,1-\delta)$ and a $1$-bounded weight vector $a \in \R^V$. 
For any $\eps > 0$, there exists $r = O(\log(1/\eps))$ such that for all $S \subseteq V$ with $|S| \le k$ and all $\sigma_S \in \{\pm 1\}^S$, we have
\begin{align}\label{eq:approx-local}
\left|\Phi_{G,\beta,h,a}(S,\sigma_S) - \Phi_{G,\beta,h,a}^{(r)}(S,\sigma_S) \right| \le \eps.
\end{align}
\end{proposition}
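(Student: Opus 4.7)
The plan is to decompose the error based on whether a vertex lies inside $\ball(S,r)$. Writing $I_G(v) = \E_G[X_v \mid X_S = \sigma_S] - \E_G[X_v]$ for the influence of the pinning in $G$ and $I_{G[\ball(S,r)]}(v)$ for its analogue in the truncated model,
\begin{align*}
\Phi(S,\sigma_S) - \Phi^{(r)}(S,\sigma_S) = \underbrace{\sum_{v \notin \ball(S,r)} a_v\,I_G(v)}_{(\mathrm{I})} + \underbrace{\sum_{v \in \ball(S,r)} a_v \bigl(I_G(v) - I_{G[\ball(S,r)]}(v)\bigr)}_{(\mathrm{II})}.
\end{align*}
To handle $(\mathrm{I})$, I would enumerate $S = \{s_1,\ldots,s_m\}$ with $m \le k$, let $\Lambda_j = \{s_1,\ldots,s_j\}$, and telescope $I_G(v) = \sum_{j=1}^{m}\bigl(\E_G[X_v \mid X_{\Lambda_j}] - \E_G[X_v \mid X_{\Lambda_{j-1}}]\bigr)$. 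Each increment is dominated by the one-vertex influence $|\E_G[X_v \mid X_{\Lambda_{j-1}}, X_{s_j} = +] - \E_G[X_v \mid X_{\Lambda_{j-1}}, X_{s_j} = -]|$, so the Total Influence Decay bullet of \cref{lem:SSM-TID}, applied to the pinned measure $\mu_G^{\sigma_{\Lambda_{j-1}}}$ with $u = s_j$ and distance threshold $r+1$, bounds the sum of these over $v \notin \ball(S,r)$ by $2C(1-\delta)^{r+1}$. Summing over $j$ gives $|(\mathrm{I})| \le 2kC(1-\delta)^{r+1}$.

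For $(\mathrm{II})$, since $G[\ball(S,r)]$ is an induced subgraph of $G$ and hence also belongs to $\MM(\Delta,1-\delta)$, both bullets of \cref{lem:SSM-TID} apply to $\mu_{G[\ball(S,r)]}$ as well. I would split $\ball(S,r)$ by $\dist_G(v,S)$ at a threshold $L \in \{1,\ldots,r-1\}$. For the ``far'' vertices with $\dist(v,S) > L$, the same telescoping-plus-TID argument bounds $\sum_{v\text{ far}} |a_v|\bigl(|I_G(v)| + |I_{G[\ball(S,r)]}(v)|\bigr) \le 4kC(1-\delta)^{L+1}$. For the ``near'' vertices with $\dist(v,S) \le L$, each satisfies $\dist(v, V \setminus \ball(S,r)) \ge r-L$, and I aim for a per-vertex bound $|I_G(v) - I_{G[\ball(S,r)]}(v)| \le C'(1-\delta)^{r-L}$ with $C'$ depending on $\Delta,\delta$. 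Since there are at most $k\Delta^L$ near vertices, the near contribution is then at most $kC'\Delta^L(1-\delta)^{r-L}$. Choosing $L = \alpha r$ with $0 < \alpha < \log(1/(1-\delta))/\log(\Delta/(1-\delta))$---an interval that is nonempty since $\Delta \ge 3$ and $1-\delta < 1$---makes both the far and near contributions decay exponentially in $r$, so $r = O(\log(1/\eps))$ suffices, with hidden constants depending on $\Delta,\delta,k$.

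The technical crux will be the per-vertex near bound. I plan to proceed in two stages. First, applying Strong Spatial Mixing with pinning $\Lambda = S$, vertex-of-interest $v$, and conditioning set $W = V \setminus \ball(S,r)$ yields $|\E_G[X_v \mid X_S] - \E_G[X_v \mid X_S, X_{V\setminus\ball(S,r)} = \eta_0]| \le 2C(1-\delta)^{r-L}$ for any fixed $\eta_0 \in \{\pm 1\}^{V \setminus \ball(S,r)}$. Because conditioning on $X_{V\setminus\ball(S,r)}$ decouples $\ball(S,r)$ from the rest of the graph in the Markov sense, the right-hand expectation equals $\E[X_v \mid X_S]$ under the Ising model on $G[\ball(S,r)]$ augmented by an effective external field $h_{\eta_0}$ supported on the inner boundary of $\ball(S,r)$ and determined by the crossing couplings. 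Second, to remove this effective field and reach $\mu_{G[\ball(S,r)]}$ itself, I would integrate the field-derivative $\partial \E[X_v \mid X_S]/\partial h_u = \Cov(X_v, X_u \mid X_S)$ along the linear path $t\,h_{\eta_0}$, $t \in [0,1]$. The main obstacle is that the inner boundary can contain up to $\Theta(\Delta^r)$ vertices, so a naive summation of covariances would blow up; here the Total Influence Decay bullet is decisive, giving $\sum_{u:\dist(u,v) \ge r-L} |\Cov(X_v, X_u \mid X_S)| \le C(1-\delta)^{r-L}$ (after using that $|\Cov(X_v,X_u)|$ is bounded by the single-vertex influence quantity in \cref{lem:SSM-TID}). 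Running the same two-stage argument for the unpinned expectation $\E_G[X_v]$ versus $\E_{G[\ball(S,r)]}[X_v]$ and subtracting yields the desired near bound, completing the proof.
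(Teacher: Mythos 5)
Your proof is correct and shares the paper's overall architecture---telescope over the vertices of $S$ and invoke Total Influence Decay for vertices far from $S$, invoke Strong Spatial Mixing to compare $G$ with $G[\ball(S,r)]$ for vertices near $S$, and balance the volume factor $\Delta^{L}$ against the decay $(1-\delta)^{r-L}$ (your $L$ plays the role of the paper's $\rho$ in \cref{lem:infty-rho,lem:rho-r})---but your handling of the technical crux, the per-vertex comparison of $\E_G[X_v\mid X_S]$ with $\E_{G[\ball(S,r)]}[X_v\mid X_S]$, takes a genuinely different and heavier route than the paper's \cref{lem:infty-r}. You freeze the exterior $V\setminus\ball(S,r)$ via SSM, identify the conditional law as the ball model with an effective boundary field, and then remove that field by interpolating in $h$ and controlling $\sum_u|\Cov(X_v,X_u\mid X_S)|$ over the boundary via TID; this works (the interpolated models stay in $\MM(\Delta,1-\delta)$ since membership is field-independent, and the perturbation per boundary vertex is at most $\Delta\max_{uv}|\beta_{uv}|$), though you should make explicit that you sum TID from the \emph{target} $v$ and use symmetry of covariance, since \cref{lem:SSM-TID} fixes the source and sums over targets. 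The paper avoids your second stage entirely: it conditions instead on the sphere $U=\{u:\dist_G(u,S)=r\}$, which lies \emph{inside} the ball and separates $v$ from the exterior in both $G$ and $G[\ball(S,r)]$, so that the conditional law of $X_v$ given $X_S,X_U$ is literally identical in the two models and a single application of SSM over boundary conditions on $U$ gives the $C(1-\delta)^{r-\rho}$ bound with no effective field to remove. Your field-interpolation argument costs extra constants and an extra appeal to TID but would generalize to settings without an exact separating set; for this proposition the paper's conditioning trick is the cleaner choice.
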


Next, we give a linear-time algorithm for approximately maximizing the local influence.

\begin{proposition}\label{prop:alg-local}
Consider an Ising model on a graph $G=(V,E)$ from the family $\MM(\Delta,1-\delta)$ and a $1$-bounded weight vector $a \in \R^V$. 
For any $\eps > 0$ and $r \in \N^+$, there exists an algorithm that finds a subset $\hat{S} \subseteq V$ with $|\hat{S}| \le k$ and a partial assignment $\sigma_{\hat{S}} \in \{\pm 1\}^{\hat{S}}$ such that
\begin{align}\label{eq:alg-local}
\Phi_{G,\beta,h,a}^{(r)}(\hat{S},\sigma_{\hat{S}}) 
\ge \max_{\substack{S\subseteq V,\, |S| \le k \\ \sigma_S \in \{\pm 1\}^S}} \left\{ \Phi_{G,\beta,h,a}^{(r)}(S,\sigma_S) \right\} - \eps.
\end{align}
The running time of the algorithm is $O(n) \cdot (1/\eps)^{O(1)} \cdot e^{O(r)}$.
\end{proposition}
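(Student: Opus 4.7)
The plan is to reduce the global optimization over $(S,\sigma_S)$ to many local computations on subgraphs of size $e^{O(r)}$, then stitch them back together by a bounded search that exploits the decomposition already noted before the proposition.

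First I would formalize the cluster decomposition. Given $S\subseteq V$, put an auxiliary edge between $u,v\in S$ whenever $\dist_G(u,v)\le 2r+1$, and let $C_1,\dots,C_m$ be the connected components of the resulting auxiliary graph on $S$. With this definition, the balls $\ball(C_i,r)$ are pairwise vertex-disjoint and no $G$-edge crosses between them, so $G[\ball(S,r)] = \bigsqcup_i G[\ball(C_i,r)]$ and hence $\Phi^{(r)}(S,\sigma_S)=\sum_i\Phi^{(r)}(C_i,\sigma_{C_i})$. A cluster $C$ of size $j\le k$ has diameter at most $(2r+1)(k-1)$ in $G$, so both $C$ and $\ball(C,r)$ are contained in $\ball(u_C,O(rk))$ for any $u_C\in C$. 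In particular $|\ball(C,r)|\le \Delta^{O(rk)}=e^{O(r)}$.

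The algorithm has three phases. \textbf{Phase 1 (enumerate clusters).} For each $u\in V$, run BFS up to depth $(2r+1)k$ and enumerate every subset $C\ni u$ with $|C|\le k$, $u=\min C$, that is connected in the "close" auxiliary graph. Each $u$ contributes at most $\binom{\Delta^{O(rk)}}{k-1}=e^{O(r)}$ candidates. \textbf{Phase 2 (approximate local influences).} For each candidate $C$ and each $\sigma_C\in\{\pm1\}^C$ (only $2^k=O(1)$ choices), compute an estimate $\hat\Phi(C,\sigma_C)$ of $\Phi^{(r)}(C,\sigma_C)$ to additive accuracy $\eps/(4k)$. Since $G[\ball(C,r)]$ also lies in $\MM(\Delta,1-\delta)$, \cref{lem:SSM-TID} and Weitz's self-avoiding-walk tree algorithm compute every required single-vertex marginal deterministically to accuracy $\eps'=\eps/(4k\cdot e^{O(r)})$ in time $(1/\eps')^{O(1)}=e^{O(r)}(1/\eps)^{O(1)}$; summing the $\le e^{O(r)}$ weighted marginals in $\ball(C,r)$ yields $\hat\Phi(C,\sigma_C)$. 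Phases 1--2 therefore cost $O(n)\cdot e^{O(r)}\cdot(1/\eps)^{O(1)}$.

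\textbf{Phase 3 (select the best configuration).} Let $\phi_j(u)$ be the maximum of $\hat\Phi(C,\sigma_C)$ over candidates $C$ of size exactly $j$ with $u=\min C$, and store the witness. Iterate over the $O(1)$ integer compositions $j_1+\cdots+j_m\le k$ with $m\le k$. For each composition we must pick representatives $u_1,\dots,u_m$ whose witness clusters are pairwise at $G$-distance $>2r+1$ (so that the decomposition applies) and that maximize $\sum_i\phi_{j_i}(u_i)$. The key observation that gives linear time is a \emph{kernel} argument: in the optimum, each $u_i$ lies within the top $T:=2k\Delta^{O(rk)}$ vertices sorted by $\phi_{j_i}$, because the other $m-1$ representatives can together block at most $(m-1)\Delta^{O(rk)}<T$ candidates, so any higher-scoring non-blocked vertex would permit a strictly improving swap. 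Hence we only enumerate $T^m=e^{O(r)}$ tuples per composition, verify separation in $O(k^2)$ time, and pick the best. Finding the top $T$ vertices for each $\phi_j$ takes $O(n)$ time. The total cost stays at $O(n)\cdot e^{O(r)}\cdot(1/\eps)^{O(1)}$. Correctness: the cluster decomposition applies exactly to both the chosen $\hat S$ and to the true maximizer $S^\ast$; since each of the $\le k$ per-cluster estimates has error $\le\eps/(4k)$, the accumulated additive loss is at most $\eps/2\le\eps$.

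The routine parts (Phases 1--2) will be essentially mechanical once \cref{lem:SSM-TID} and Weitz's tree are invoked. The main obstacle is Phase 3: without the top-$T$ kernelization one is forced into a $\Theta(n^k)$ enumeration, so I expect the care to go into proving the swap argument (which needs the bounded-degree assumption to bound the exclusion radius) and into correctly handling the dependence of the "close" relation on the cluster size so that candidates enumerated per $u$ indeed cover all possible clusters arising from the optimum.
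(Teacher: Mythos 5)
Your overall architecture --- decompose the local influence over clusters that are connected in $G^{\le 2r+1}$, approximate each cluster's local influence by deterministic counting on the ball $\ball(T,r)$ of size $e^{O(r)}$, and then kernelize the selection step so that only $e^{O(r)}$ candidates need to be combined --- is the same as the paper's, and your Phases 1--2 match the paper's Steps 1--2 essentially verbatim. The gap is in Phase 3. You collapse each pair (representative $u$, size $j$) to a \emph{single} fixed witness cluster attaining $\phi_j(u)$, and then impose the pairwise-separation constraint only on these witnesses. The resulting program can have a strictly smaller optimum than the true cluster-selection problem. Concretely, take $k=4$ and suppose the optimal decomposition uses two size-$2$ clusters $C_2=\{u,b\}$ and $C_3=\{v,c\}$ with $\dist_G(C_2,C_3)>2r+1$, while a third cluster $C_1=\{u,a\}$ satisfies $\hat{\Phi}(C_1)>\hat{\Phi}(C_2)$ but $\dist_G(C_1,C_3)\le 2r+1$ (e.g.\ the vertices lie along a path $b,u,a,c,v$ with suitable spacings, and all other clusters have negligible influence). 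Then the stored witness for $(u,2)$ is $C_1$, the tuple $(u,v)$ is declared infeasible, and no remaining tuple recovers the value $\hat{\Phi}(C_2)+\hat{\Phi}(C_3)$; the algorithm can thus fall short of the optimum by a constant rather than by $\eps$. Your swap argument only shows that the optimum of \emph{your} over-constrained program is attained inside the top-$T$ kernel; it does not show that this optimum is close to the true one.

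The repair is to kernelize at the level of clusters rather than representatives, which is what the paper does: form the conflict graph $H$ whose vertices are all clusters $T$ with $|T|\le k$ that are connected in $G^{\le 2r+1}$, each carrying weight $w_T=\max_{\sigma_T}\psi_T(\sigma_T)$ and cost $|T|$, and keep in each cost class the $k(D+1)$ heaviest clusters, where $D=e^{O(r)}$ bounds the maximum degree of $H$. An exchange argument applied to a maximum-weight independent set chosen to contain the most kernel vertices (an extremal choice that also handles the ties your ``strictly improving swap'' glosses over) shows that some optimal independent set lies entirely in the kernel, and brute force over the kernel costs only $e^{O(r)}$. With that replacement the rest of your proof, including the cluster enumeration, the use of strong spatial mixing to justify the local FPTAS, and the error accounting (at most $k$ clusters, each contributing additive error $\eps/(4k)$ on both sides of the comparison), goes through.
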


\cref{thm:max-inf} follows immediately from \cref{prop:approx-local,prop:alg-local}. 
For ease of notations we omit $G,\beta,h,a$ in the subscripts for the rest of the paper when it is clear from the context.

\begin{proof}[Proof of \cref{thm:max-inf}]
Define the optimal solutions
\begin{align}
(S^*,\sigma_{S^*}) 
&= \argmax_{\substack{S\subseteq V,\, |S| \le k \\ \sigma_S \in \{\pm 1\}^S}} \left\{ \Phi(S,\sigma_S) \right\} \\
\text{and}\qquad
(S^\dagger,\sigma_{S^\dagger}) 
&= \argmax_{\substack{S\subseteq V,\, |S| \le k \\ \sigma_S \in \{\pm 1\}^S}} \left\{ \Phi^{(r)}(S,\sigma_S) \right\}. \label{eq:opt}
\end{align}
Let $r = O(\log(1/\eps))$ be from \cref{prop:approx-local} such that \cref{eq:approx-local} holds with the error in the right-hand side being $\eps/3$. 
(Note that we can compute $r$ efficiently by \cref{eq:rho,eq:r} from the proof of \cref{prop:approx-local}.)
For this $r$ use the algorithm from \cref{prop:alg-local} to find $\hat{S} \subseteq V$ and $\sigma_{\hat{S}} \in \{\pm 1\}^{\hat{S}}$ such that \cref{eq:alg-local} holds with the error in the right-hand side being $\eps/3$.
Thus, we conclude that
\begin{align*}
\Phi(\hat{S},\sigma_{\hat{S}}) 
&\overset{ \text{\cref{eq:approx-local}} }{\ge} \Phi^{(r)}(\hat{S},\sigma_{\hat{S}}) - \frac{\eps}{3}
\overset{ \text{\cref{eq:alg-local}} }{\ge} \Phi^{(r)}(S^\dagger,\sigma_{S^\dagger}) - \frac{2\eps}{3} \\
&\overset{ \text{\cref{eq:opt}} }{\ge} \Phi^{(r)}(S^*,\sigma_{S^*}) - \frac{2\eps}{3}
\overset{ \text{\cref{eq:approx-local}} }{\ge} \Phi(S^*,\sigma_{S^*}) - \eps
\end{align*}
as wanted.
The running time of the algorithm is $O(n) \cdot (1/\eps)^{O(1)} \cdot e^{O(r)} = O(n) \cdot (1/\eps)^{O(1)}$.
\end{proof}

\subsection{Proof of \texorpdfstring{\cref{prop:approx-local}}{Proposition 3.2}}

Fix $S \subseteq V$ and $\sigma_S \in \{\pm 1\}^S$,
and define
\begin{align*}
\f(k,\ell) 
&= \E_{G[\ball(S,k)]} \left[ \sum_{v \in \ball(S,\ell)} a_v X_v \,\Bigg\vert\, X_S = \sigma_S \right] - \E_{G[\ball(S,k)]} \left[ \sum_{v \in \ball(S,\ell)} a_v X_v \right] \\
&= \sum_{v \in \ball(S,\ell)} a_v \left( \E_{G[\ball(S,k)]} \left[ X_v \,\vert\, X_S = \sigma_S \right] - \E_{G[\ball(S,k)]} \left[ X_v \right] \right).
\end{align*}
Define $\ball(v,\infty)$ to be the connected component containing $v$ and $\ball(S,\infty) = \bigcup_{v \in S} \ball(v,\infty)$. 
Then we have $\f(\infty,\infty) = \Phi(S,\sigma_S)$ and $\f(r,r) = \Phi^{(r)}(S,\sigma_S)$; 
to see the former, observe that
\begin{align*}
\Phi(S,\sigma_S) &= \E_G \left[ \sum_{v \in V} a_v X_v \,\Bigg\vert\, X_S = \sigma_S \right] - \E_G \left[ \sum_{v \in V} a_v X_v \right] \\
&= \E_G \left[ \sum_{v \in \ball(S,\infty)} a_v X_v \,\Bigg\vert\, X_S = \sigma_S \right] - \E_G \left[ \sum_{v \in \ball(S,\infty)} a_v X_v \right] \tag{$X_S$ and $X_{V \setminus \ball(S,\infty)}$ are independent} \\
&= \E_{G[\ball(S,\infty)]} \left[ \sum_{v \in \ball(S,\infty)} a_v X_v \,\Bigg\vert\, X_S = \sigma_S \right] - \E_{G[\ball(S,\infty)]} \left[ \sum_{v \in \ball(S,\infty)} a_v X_v \right] \tag{$\mu_G = \mu_{G[\ball(S,\infty)]} \otimes \mu_{G[V \setminus \ball(S,\infty)]}$} \\
&= \f(\infty,\infty).
\end{align*}

Therefore, it suffices to show that $|\f(\infty,\infty) - \f(r,r)| \le \eps$, which follows immediately from the following three lemmas.

\begin{lemma}\label{lem:infty-rho}
There exists $\rho = O(\log(1/\eps))$ such that $|\f(\infty,\infty) - \f(\infty,\rho)| \le \eps/3$.
\end{lemma}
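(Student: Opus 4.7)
The plan is to reduce the lemma directly to the total influence decay clause of \cref{lem:SSM-TID}. First I would observe that, because the Ising distribution factorizes over connected components, the marginal of $\mu_G$ on $\ball(S,\infty)$ coincides with $\mu_{G[\ball(S,\infty)]}$. This lets me rewrite
\[
\f(\infty,\ell) = \sum_{v \in \ball(S,\ell)} a_v \bigl( \E_G[X_v \mid X_S = \sigma_S] - \E_G[X_v] \bigr),
\]
so by the triangle inequality and $\norm{a}_\infty \le 1$ it suffices to bound $\sum_{v:\, \dist_G(v,S) > \rho} \bigl| \E_G[X_v \mid X_S = \sigma_S] - \E_G[X_v] \bigr|$.

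To apply \cref{lem:SSM-TID}, which concerns pinning a single vertex, I would telescope over $S$ one vertex at a time. Enumerate $S = \{u_1,\dots,u_s\}$ (with $s \le k$) and let $S_i := \{u_1,\dots,u_i\}$. The difference $\E[X_v \mid X_S = \sigma_S] - \E[X_v]$ splits as a sum of $s$ increments of the form $\E_{\mu^{\sigma_{S_{i-1}}}}[X_v \mid X_{u_i} = \sigma_{u_i}] - \E_{\mu^{\sigma_{S_{i-1}}}}[X_v]$. Since the unconditioned expectation under $\mu^{\sigma_{S_{i-1}}}$ is a convex combination of the two conditionals $X_{u_i} = \pm 1$, each such increment has absolute value at most $2\bigl| \Pr_{\mu^{\sigma_{S_{i-1}}}}(X_v = + \mid X_{u_i} = +) - \Pr_{\mu^{\sigma_{S_{i-1}}}}(X_v = + \mid X_{u_i} = -) \bigr|$.

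Next I would apply the total influence decay inequality of \cref{lem:SSM-TID} to the pinned model $\mu^{\sigma_{S_{i-1}}}$ with root $u_i$ and radius $L = \rho$. The key point is that $\dist_G(v,S) > \rho$ forces $\dist_G(v,u_i) \ge \rho$ for \emph{every} $i$, so summing the bound from the previous paragraph over $v$ contributes at most $2C(1-\delta)^\rho$ for each telescoping step, and summing over the $s \le k$ steps gives $|\f(\infty,\infty) - \f(\infty,\rho)| \le 2kC(1-\delta)^\rho$. Choosing
\[
\rho = \left\lceil \log(6kC/\eps) \big/ \log\bigl(1/(1-\delta)\bigr) \right\rceil = O(\log(1/\eps))
\]
then makes the right-hand side at most $\eps/3$.

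There is no real obstacle beyond setting up the reduction correctly. The only subtle point is the telescoping: one must be careful to carry the partial pinning $\sigma_{S_{i-1}}$ through so that each single-vertex perturbation is handled under a valid pinning for which \cref{lem:SSM-TID} holds, and that the distance condition $\dist_G(v,u_i) \ge \rho$ is preserved uniformly in $i$ by the uniform cutoff $\dist_G(v,S) > \rho$.
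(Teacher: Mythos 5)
Your proposal is correct and follows essentially the same route as the paper: restrict to the component of $S$, telescope over the vertices of $S$ one pinning at a time, bound each increment by twice the single-vertex influence under the partial pinning $\sigma_{S_{i-1}}$, and invoke Total Influence Decay with $L=\rho$ to get a bound of $2kC(1-\delta)^{\rho}\le\eps/3$. The only cosmetic difference is your choice $\rho=\lceil\log(6kC/\eps)/\log(1/(1-\delta))\rceil$ versus the paper's $\rho=\lceil\frac{1}{\delta}\log(6Ck/\eps)\rceil$; since $\log(1/(1-\delta))\ge\delta$, both yield the same $O(\log(1/\eps))$ guarantee.
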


\begin{proof}
We define
\begin{align}\label{eq:rho}
\rho = \ceil{ \frac{1}{\delta} \log\left( \frac{6Ck}{\eps} \right) }.
\end{align}
For simplicity we write $\Pr = \Pr_{G[\ball(S,\infty)]}$ for the Ising distribution $\mu_{G[\ball(S,\infty)]}$, and $\E = \E_{G[\ball(S,\infty)]}$ for the expectation over $\mu_{G[\ball(S,\infty)]}$. 
By definitions we have that
\begin{align*}
|\f(\infty,\infty) - \f(\infty,\rho)| 
= \left| \sum_{v \in \ball(S,\infty) \setminus \ball(S,\rho)} a_v \left( \E \left[ X_v \,\vert\, X_S = \sigma_S \right] - \E \left[ X_v \right] \right) \right|. 
\end{align*}
Suppose $S = \{v_1,\dots,v_{k'}\}$ where $k'=|S| \le k$. For $0\le i \le k'$ we define $S_i = \{v_1,\dots,v_i\}$ and let $\sigma_{S_i}$ be $\sigma_S$ restricted to $S_i$. 
Then it follows that
\begin{align*}
& |\f(\infty,\infty) - \f(\infty,\rho)| \\
={}& \left| \sum_{i=1}^{k'} \sum_{v \in \ball(S,\infty) \setminus \ball(S,\rho)} a_v \left( \E \left[ X_v \,\vert\, X_{S_i} = \sigma_{S_i} \right] - \E \left[ X_v \,\vert\, X_{S_{i-1}} = \sigma_{S_{i-1}} \right] \right) \right| \\
\overset{\text{(i)}}{\le}{}& \sum_{i=1}^{k'} \sum_{v \in \ball(S,\infty) \setminus \ball(S,\rho)} |a_v| \cdot \left| \E \left[ X_v \,\vert\, X_{S_i} = \sigma_{S_i} \right] - \E \left[ X_v \,\vert\, X_{S_{i-1}} = \sigma_{S_{i-1}} \right] \right| \\
\overset{\text{(ii)}}{\le}{}& \sum_{i=1}^{k'} \sum_{v \in \ball(S,\infty) \setminus \ball(S,\rho)} 
2 \left| \Pr^{\sigma_{S_{i-1}}}(X_v = +1 \mid X_{v_i} = +1) - \Pr^{\sigma_{S_{i-1}}}(X_v = +1 \mid X_{v_i} = -1) \right| \\
\overset{\text{(iii)}}{\le}{}& 2Ck(1-\delta)^\rho \\
\overset{\text{(iv)}}{\le}{}& \frac{\eps}{3},
\end{align*}
where (i) is the triangle inequality, (ii) follows from $|a_v| \le 1$ and expanding the expectation, 
(iii) follows from Total Influence Decay (\cref{lem:SSM-TID}), and (iv) is by our choice of $\rho$. 
\end{proof}

\begin{lemma}\label{lem:infty-r}
Given $\rho \in \N^+$, there exists $\rho < r = O(\rho+\log(1/\eps))$ such that $|\f(\infty,\rho) - \f(r,\rho)| \le \eps/3$. 
\end{lemma}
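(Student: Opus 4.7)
The plan is to write $\f(\infty,\rho) - \f(r,\rho)$, for $r > \rho$, as a weighted sum over $v \in \ball(S,\rho)\setminus S$ of differences in marginals of $X_v$ under the two Ising models $\mu_\infty := \mu_{G[\ball(S,\infty)]}$ and $\mu_r := \mu_{G[\ball(S,r)]}$ (both the conditional marginal given $X_S = \sigma_S$ and the unconditional one; vertices in $S$ contribute zero). Since $|\ball(S,\rho)| \le k\Delta^{\rho+1}$ and $|a_v|\le 1$, it will suffice to show that each per-vertex marginal difference is at most $O((1-\delta)^{r-\rho})$; then choosing $r = \rho + \lceil \tfrac{1}{\delta}\log(O(k\Delta^\rho/\eps)) \rceil = O(\rho + \log(1/\eps))$ makes the total at most $\eps/3$.

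The per-vertex bound is to be obtained by a Markov-plus-SSM argument. For fixed $v \in \ball(S,\rho)\setminus S$, we condition further on the outer layer $\partial B_r := \ball(S,r)\setminus\ball(S,r-1)$. By the Markov property applied to either $\mu_\infty$ or $\mu_r$, the conditional law of $X_v$ given $X_S = \sigma_S$ and $X_{\partial B_r} = \tau$ depends only on the Ising model on $G[\ball(S,r)]$ with the combined pinning, and hence is \emph{the same} under the two models. Call this common value $g(\tau)$, and let $p_\infty(\tau), p_r(\tau)$ denote the laws of $X_{\partial B_r}$ given $X_S = \sigma_S$ under the two models. Then, for any fixed $\tau_0$,
\[
\mu_\infty(X_v = + \mid X_S = \sigma_S) - \mu_r(X_v = + \mid X_S = \sigma_S) = \sum_\tau (p_\infty(\tau) - p_r(\tau))(g(\tau) - g(\tau_0)),
\]
using that $p_\infty, p_r$ are probability measures. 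Applying Strong Spatial Mixing (\cref{lem:SSM-TID}) to the subgraph $G[\ball(S,r)]$ (which still belongs to $\MM(\Delta,1-\delta)$) with pinning $\sigma_S$ on $S$, varying pinning $\tau$ on $W = \partial B_r$, and using $\dist_{G[\ball(S,r)]}(v, \partial B_r) \ge r - \rho$, gives $|g(\tau) - g(\tau_0)| \le C(1-\delta)^{r-\rho}$. Combined with $\|p_\infty - p_r\|_1 \le 2$, this yields $|\mu_\infty(X_v = + \mid X_S) - \mu_r(X_v = + \mid X_S)| \le 2C(1-\delta)^{r-\rho}$; the same argument with $S$ replaced by $\emptyset$ bounds the unconditional marginal difference by the same quantity.

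Summing these per-vertex bounds over $v \in \ball(S,\rho)$ and using $\E[X_v] = 2\mu(X_v = +) - 1$ to convert between probabilities and expectations, we obtain $|\f(\infty,\rho) - \f(r,\rho)| \le O_{\Delta,\delta,k}(\Delta^\rho)(1-\delta)^{r-\rho}$, so the claimed choice of $r$ delivers the $\eps/3$ bound. The only conceptually subtle step is the identification of $g(\tau)$ as a \emph{common} conditional marginal under $\mu_\infty$ and $\mu_r$ via the Markov property, which reduces the cross-model comparison to a single SSM application on the finite subgraph $G[\ball(S,r)]$; the remaining algebra is just counting vertices and solving for $r$.
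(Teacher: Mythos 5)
Your argument is correct and essentially identical to the paper's: both reduce to per-vertex marginal differences between $\mu_{G[\ball(S,\infty)]}$ and $\mu_{G[\ball(S,r)]}$, condition on the layer at distance $r$ from $S$ (your $\partial B_r$ is the paper's $U$), use the Markov property to identify the two models' conditional marginals given the boundary spins, and invoke Strong Spatial Mixing with $\dist_G(v,\partial B_r)\ge r-\rho$ — your algebraic centering identity $\sum_\tau(p_\infty(\tau)-p_r(\tau))(g(\tau)-g(\tau_0))$ is just a rephrasing of the paper's coupling through the boundary. One small slip: for $v\in S$ the \emph{unconditional} marginals under the two models need not coincide, so those vertices do not contribute zero; but your own per-vertex bound (applied with $S$ replaced by $\emptyset$) covers them, and your final summation over all of $\ball(S,\rho)$ is the correct one, so the conclusion stands.
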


\begin{proof}
We define
\begin{align}\label{eq:r}
r = \rho + \ceil{ \frac{1}{\delta} \left( \log \left(\frac{24C}{\eps}\right) + \rho \log \Delta \right) }.
\end{align}
By the triangle inequality and $\norm{a}_\infty \le 1$ we have that
\begin{align*}
&|\f(\infty,\rho) - \f(r,\rho)| \\
\le{}& \sum_{v \in \ball(S,\rho)} \left| \E_{G_\infty} \left[ X_v \,\vert\, X_S = \sigma_S \right] - \E_{G_r} \left[ X_v \,\vert\, X_S = \sigma_S \right] \right|
+ \left| \E_{G_\infty} \left[ X_v \right] - \E_{G_r} \left[ X_v \right] \right| \\
={}& \sum_{v \in \ball(S,\rho)} 2 \left| \Pr_{G_\infty}^{\sigma_S} \left( X_v = +1 \right) - \Pr_{G_r}^{\sigma_S} \left( X_v = +1 \right) \right|
+ 2 \left| \Pr_{G_\infty} \left( X_v = +1 \right) - \Pr_{G_r} \left( X_v = +1 \right) \right|
\end{align*}
where $G_\infty = G[\ball(S,\infty)]$, $G_r = G[\ball(S,r)]$, $\Pr_{G_\infty}^\tau = \Pr_{\mu_{G_\infty}^\tau}$, and $\Pr_{G_r}^\tau = \Pr_{\mu_{G_r}^\tau}$. 
Let $U = \{u \in V: \dist_G(u,S) = r\}$.
For any $v \in \ball(S,\rho)$, we can couple $X_v \sim \Pr_{G_\infty}(X_v=\cdot)$ and $X'_v \sim \Pr_{G_r}(X'_v=\cdot)$ by first revealing the spin assignments $X_U \sim \Pr_{G_\infty}(X_U=\cdot)$ and $X'_U \sim \Pr_{G_r}(X'_U=\cdot)$ on $U$ independently and then couple $X_v,X'_v$ optimally conditioned on $X_U,X'_U$ respectively. 
Therefore, we deduce that
\begin{align*}
\left| \Pr_{G_\infty}^{\sigma_S} \left( X_v = +1 \right) - \Pr_{G_r}^{\sigma_S} \left( X_v = +1 \right) \right| 
&\le \max_{\sigma_U,\tau_U \in \{\pm 1\}^U} \left| \Pr_{G}^{\sigma_S,\sigma_U} \left( X_v = +1 \right) - \Pr_{G}^{\sigma_S,\tau_U} \left( X_v = +1 \right) \right| \\
&\le C(1-\delta)^{r-\rho},
\end{align*}
where the first inequality follows from the coupling procedure and the fact that 
$$ \Pr_{G_\infty}^{\sigma_S,\sigma_U}(X_v=\cdot) = \Pr_{G}^{\sigma_S,\sigma_U}(X_v=\cdot) = \Pr_{G_r}^{\sigma_S,\sigma_U}(X_v=\cdot), $$ 
and the second inequality follows from Strong Spatial Mixing (\cref{lem:SSM-TID}) and $\dist_G(v,U) \ge \dist_G(S,U) - \dist_G(v,S) \ge r-\rho$. 
Similarly, we also have 
\[
\left| \Pr_{G_\infty} \left( X_v = +1 \right) - \Pr_{G_r} \left( X_v = +1 \right) \right| 
\le C(1-\delta)^{r-\rho}.
\]

Hence, combining everything above and $|\ball(S,\rho)| \le 2\Delta^\rho$ we get
\[
|\f(\infty,\rho) - \f(r,\rho)| 
\le 2\Delta^\rho \cdot 4C(1-\delta)^{r-\rho} 
\le \frac{\eps}{3},
\]
as wanted.
\end{proof}

\begin{lemma}\label{lem:rho-r}
For $\rho \in \N^+$ in \cref{eq:rho} and any integer $r \ge \rho$, we have $|\f(r,\rho) - \f(r,r)| \le \eps/3$.
\end{lemma}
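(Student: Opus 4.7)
The plan is to mirror the telescoping argument used in the proof of \cref{lem:infty-rho} (the $\f(\infty,\infty)$ versus $\f(\infty,\rho)$ bound), except that the ambient graph is now the finite truncation $G_r := G[\ball(S,r)]$ in place of $G[\ball(S,\infty)]$. The first observation is that restricting $(G,\beta,h)$ to $G_r$ yields another Ising model in the family $\MM(\Delta,1-\delta)$ (maximum degree is only decreased, and edge couplings are unchanged), so the Total Influence Decay bound of \cref{lem:SSM-TID} remains available on $G_r$ with the same constant $C$.

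Since $r \ge \rho$, we have $\ball(S,\rho) \subseteq \ball(S,r)$, so
$$\f(r,\rho) - \f(r,r) = -\sum_{v \in \ball(S,r) \setminus \ball(S,\rho)} a_v \bigl( \E_{G_r}[X_v \mid X_S = \sigma_S] - \E_{G_r}[X_v] \bigr).$$
Writing $S = \{v_1,\dots,v_{k'}\}$ and $S_i = \{v_1,\dots,v_i\}$ (with $S_0 = \emptyset$), and letting $\sigma_{S_i}$ denote the restriction of $\sigma_S$ to $S_i$, I would telescope the inner difference as $\sum_{i=1}^{k'} \bigl( \E_{G_r}[X_v \mid X_{S_i} = \sigma_{S_i}] - \E_{G_r}[X_v \mid X_{S_{i-1}} = \sigma_{S_{i-1}}] \bigr)$. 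Exactly as in the proof of \cref{lem:infty-rho}, each summand is bounded in absolute value by $2 \bigl| \Pr_{G_r}^{\sigma_{S_{i-1}}}(X_v = +1 \mid X_{v_i} = +1) - \Pr_{G_r}^{\sigma_{S_{i-1}}}(X_v = +1 \mid X_{v_i} = -1) \bigr|$, since $\E_{G_r}[X_v \mid X_{S_{i-1}} = \sigma_{S_{i-1}}]$ is a convex combination of $\E_{G_r}[X_v \mid X_{S_{i-1}} = \sigma_{S_{i-1}}, X_{v_i} = \pm 1]$.

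The only geometric ingredient is that every $v$ appearing in the outer sum satisfies $\dist_G(v,v_i) \ge \dist_G(v,S) \ge \rho$, hence $\dist_{G_r}(v,v_i) \ge \rho$ since $G_r$ is a subgraph of $G$. Thus \cref{lem:SSM-TID} (Total Influence Decay), applied on $G_r$ with pinning $\sigma_{S_{i-1}}$ and radius $L=\rho$, controls the inner sum over $v$ by $C(1-\delta)^\rho$. Summing over the at most $k$ choices of $i$, using $|a_v| \le 1$, and keeping the factor $2$ from the telescoping step, yields
$$|\f(r,\rho) - \f(r,r)| \le 2Ck(1-\delta)^\rho \le \frac{\eps}{3},$$
where the last inequality is precisely the calibration of $\rho$ in \cref{eq:rho}. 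There is essentially no obstacle beyond verifying that \cref{lem:SSM-TID} transfers to the restricted model on $G_r$; the resulting bound does not depend on $r$ (only $r \ge \rho$ is used, to ensure $\ball(S,\rho) \subseteq \ball(S,r)$), so the argument is a near-verbatim repetition of the one used for \cref{lem:infty-rho}.
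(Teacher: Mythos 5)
Your proposal is correct and is exactly the paper's argument: the paper's proof of this lemma simply says to repeat the proof of \cref{lem:infty-rho} with $G[\ball(S,\infty)]$ replaced by $G[\ball(S,r)]$, noting that Total Influence Decay still holds on the induced subgraph. You have merely spelled out the details (the subgraph stays in $\MM(\Delta,1-\delta)$, distances only increase under restriction, and the bound $2Ck(1-\delta)^\rho\le\eps/3$ is independent of $r$), all of which are accurate.
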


\begin{proof}
The proof is exactly the same as for \cref{lem:infty-rho}; one only needs to replace $G[\ball(S,\infty)]$ with $G[\ball(S,r)]$ once noting that Total Influence Decay still holds on any subgraph.
\end{proof}

\begin{proof}[Proof of \cref{prop:approx-local}]
For $\rho,r$ given in \cref{eq:rho,eq:r}, we obtain from the triangle inequality and \cref{lem:infty-rho,lem:infty-r,lem:rho-r} that
\[
|\f(\infty,\infty) - \f(r,r)| \le
|\f(\infty,\infty) - \f(\infty,\rho)| + |\f(\infty,\rho) - \f(r,\rho)| + |\f(r,\rho) - \f(r,r)| \le \eps,
\]
as claimed.
\end{proof}

\subsection{Proof of \texorpdfstring{\cref{prop:alg-local}}{Proposition 3.3}}

For a graph $G$, let $\mathsf{cc}(G)$ denote the set of all connected components of $G$, where each connected component is viewed as a subset of vertices. 
The following decomposition lemma is easy to verify.
\begin{lemma}\label{lem:decomp}
For any $r \in \N$, $S \subseteq V$, and $\sigma_S \in \{\pm 1\}^S$, we have
\[
\Phi^{(r)}(S,\sigma_S) = \sum_{T \in \mathsf{cc} (G^{\le 2r+1}[S])} \Phi^{(r)}(T,\sigma_T).
\]
\end{lemma}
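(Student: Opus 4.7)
The plan is to verify the decomposition by showing that the induced subgraph $G[\ball(S,r)]$ splits as a disjoint union of the induced subgraphs $G[\ball(T,r)]$ indexed by the connected components $T$ of $G^{\le 2r+1}[S]$, and then invoke the product structure of the Ising measure on a graph that has no edges between two vertex parts.

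The first step is purely combinatorial. Suppose $T_1 \neq T_2$ are two distinct components of $G^{\le 2r+1}[S]$ and let $u \in T_1$, $w \in T_2$. If $\dist_G(u,w) \le 2r+1$, then $uw$ would be an edge of $G^{\le 2r+1}$, and since $u,w \in S$ also an edge of $G^{\le 2r+1}[S]$, placing $u$ and $w$ in the same component, a contradiction. Hence $\dist_G(u,w) \ge 2r+2$. From this I get two consequences: (a) $\ball(u,r) \cap \ball(w,r) = \emptyset$, since any common vertex would witness $\dist_G(u,w) \le 2r$; and (b) there is no edge of $G$ between $\ball(u,r)$ and $\ball(w,r)$, since such an edge $xy$ with $x \in \ball(u,r)$, $y \in \ball(w,r)$ would give $\dist_G(u,w) \le r + 1 + r = 2r+1$. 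Taking unions over $u \in T_1$ and $w \in T_2$, and then over all pairs of distinct components, yields
\[
G[\ball(S,r)] \;=\; \bigsqcup_{T \in \mathsf{cc}(G^{\le 2r+1}[S])} G[\ball(T,r)],
\]
where the right-hand side is a disjoint union of induced subgraphs with no edges of $G$ crossing between them.

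The second step is to read off the decomposition probabilistically. Because the Gibbs specification in \cref{eq:Gibbs} is a product over edges and vertices, having no edges between the pieces $\ball(T,r)$ means the Ising measure on $G[\ball(S,r)]$ factorizes as $\bigotimes_T \mu_{G[\ball(T,r)]}$, both with and without the pinning $X_S = \sigma_S$ (since $\sigma_S$ itself splits as $\bigsqcup_T \sigma_T$). Splitting the linear functional via $\sum_{v \in \ball(S,r)} a_v X_v = \sum_T \sum_{v \in \ball(T,r)} a_v X_v$ and using linearity of expectation, both the conditional and unconditional expectations in the definition of $\Phi^{(r)}(S,\sigma_S)$ decompose as sums over $T$; subtracting gives the stated identity.

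There is no real obstacle here; the only thing to be careful about is that the connectivity threshold in $G^{\le 2r+1}$ is $2r+1$ rather than $2r$, which is precisely what is needed to rule out edges of $G$ between distinct balls (merely ensuring vertex-disjointness of the balls would only require distance $> 2r$).
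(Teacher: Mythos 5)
Your proof is correct and follows essentially the same route as the paper: the paper's proof simply asserts the factorization $\mu_{G[\ball(S,r)]} = \bigotimes_{T} \mu_{G[\ball(T,r)]}$ (and its conditional version), which is exactly what you establish, with the combinatorial verification that distinct components yield balls at distance at least $2r+2$, hence vertex-disjoint and with no crossing edges, spelled out in full. Your closing remark correctly identifies why the threshold $2r+1$ (rather than $2r$) is the right one.
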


\begin{proof}
Follows from the fact that 
\[
\mu_{G[\ball(S,r)]} = \bigotimes_{T \in \mathsf{cc} (G^{\le 2r+1}[S])} \mu_{G[\ball(T,r)]},
\]
and the same for the conditional distribution with a partial assignment $\sigma_S$ on $S$.
\end{proof}

Hence, it suffices to consider all local influences for subsets of vertices that are connected in $G^{\le 2r+1}$. 
Our algorithm is given below.

\medskip
\noindent\textbf{Outline of the Algorithm}
\begin{enumerate}[Step 1.]

\item Construct a graph $H=(V_H,E_H)$ as follows. 
\begin{enumerate}
\item[(1a)] The vertex set $V_H$ consists of all non-empty subsets $T \subseteq V$ of vertices of size at most $k$ such that $G^{\le 2r+1}[T]$ is connected. 
\item[(1b)] Two distinct subsets $T_1,T_2$ are adjacent iff $G^{\le 2r+1}[T_1 \cup T_2]$ is connected; equivalently, $T_1,T_2$ are non-adjacent iff $\dist_G(T_1,T_2) > 2r+1$. 
\end{enumerate}
\emph{$\vartriangleright$ (\cref{lem:step1}) We can construct $H$ in $O(n) \cdot e^{O(r)}$ time.}

\item Each vertex $T \in V_H$ is assigned an integral cost $c_T \in \N^+$, a real weight $w_T \in \R$, and a partial assignment $\xi_T \in \{\pm 1\}^T$ as follows.
\begin{enumerate}
\item[(2a)] The cost of $T$ is its size; i.e., $c_T = |T|$.
\item[(2b)] 
For every $\sigma_T \in \{\pm 1\}^T$, compute $\psi_T(\sigma_T)$ such that
\begin{align}\label{eq:psi-approx}
\left| \psi_T(\sigma_T) - \Phi^{(r)}(T,\sigma_T) \right| \le \frac{\eps}{2k}.
\end{align}
The weight of $T$ is the maximum value of $\psi_T(\sigma_T)$ and the associated partial assignment is the maximizer: 
\begin{align}
\xi_T &= \argmax_{\sigma_T \in \{\pm 1\}^T} \psi_T(\sigma_T) \label{eq:xiT} \\
\text{and}\qquad
w_T &= \psi_T(\xi_T) = \max_{\sigma_T \in \{\pm 1\}^T} \psi_T(\sigma_T). \label{eq:wT}
\end{align}
\end{enumerate}
\emph{$\vartriangleright$ (\cref{lem:step2}) For each $T \in V_H$, we can compute $c_T$, $w_T$, and $\xi_T$ in $(1/\eps)^{O(1)} \cdot e^{O(r)}$ time.}

\item Given the graph $H$, costs $\{c_T\}_{T \in V_H}$, and weights $\{w_T\}_{T \in V_H}$, find a maximum weighted independent set $I^*$ of $H$ with total cost at most $k$; namely, 
\begin{align}
\max \quad&\sum_{T \in I} w_T \label{eq:I*-max} \\ 
\mathrm{s.t.} \quad & \text{$I$ is an independent set of $H$}; \nonumber \\
& \sum_{T \in I} c_T \le k. \nonumber
\end{align}
\emph{$\vartriangleright$ (\cref{lem:step3}) We can find $I^*$ in $O(n) \cdot e^{O(r)}$ time.}

\item Output 
\[
\hat{S} = \bigcup_{T \in I^*} T
\qquad\text{and}\qquad
\sigma_{\hat{S}} = (\xi_T)_{T \in I^*}.
\]
\emph{$\vartriangleright$ (\cref{lem:step4}) We have $\Phi^{(r)}(\hat{S},\sigma_{\hat{S}}) \ge \Phi^{(r)}(S^\dagger,\sigma_{S^\dagger}) -\eps$ as desired.}
\end{enumerate}
\medskip

We show the correctness of our algorithm and analyze the running time of it in the following sequence of lemmas.
Throughout, we assume that $G$ has maximum degree at most $\Delta \ge 3$, the Ising model on $G$ is from the family $\MM(\Delta,1-\delta)$, and $k\in \N^+$ is fixed.

Both Step 1 and 2 can be completed in $O(n) \cdot (1/\eps)^{O(1)} \cdot e^{O(r)}$ time.
\begin{lemma}[Step 1]
\label{lem:step1}
The graph $H$ has $N = O(n) \cdot e^{O(r)}$ vertices and maximum degree $D = e^{O(r)}$. Furthermore, one can construct $H$ in $O(n) \cdot e^{O(r)}$ time. 
\end{lemma}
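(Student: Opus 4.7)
The plan is to exploit that every $T \in V_H$ has very small $G$-diameter: since $|T| \le k$ and $G^{\le 2r+1}[T]$ is connected, any two vertices of $T$ are joined by a path of length at most $k-1$ in $G^{\le 2r+1}$, hence by a path of length at most $R := (k-1)(2r+1)$ in $G$. In particular $T \subseteq \ball(v, R)$ for every $v \in T$. Because $G$ has maximum degree $\Delta$, the crude bound $|\ball(v,R)| \le \Delta^{R+1}$ applies, and since $\Delta$ and $k$ are treated as constants, this equals $e^{O(r)}$.

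To bound $N$, I would fix an arbitrary ordering on $V$ and charge each $T \in V_H$ to its smallest element $v(T) \in T$. Then $T \setminus \{v(T)\}$ is a subset of size at most $k-1$ of $\ball(v(T),R)$, so the number of $T$'s charged to a fixed $v \in V$ is at most $\sum_{j=0}^{k-1}\binom{\Delta^{R+1}}{j} = e^{O(r)}$. Summing over $v \in V$ yields $N \le n \cdot e^{O(r)}$. For the maximum degree $D$: if $T_2$ is adjacent to $T_1$ in $H$, then $\dist_G(T_1,T_2) \le 2r+1$ and $T_2$ has $G$-diameter at most $R$, so $T_2 \subseteq \ball(T_1, R + 2r + 1)$, a set of size at most $k\,\Delta^{R + 2r + 2} = e^{O(r)}$. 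The number of candidate $T_2$'s is then at most $\binom{e^{O(r)}}{k} = e^{O(r)}$.

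For the construction I would proceed in two stages. First, for each $v \in V$ run BFS in $G$ to depth $R$ to build $\ball(v,R)$ together with the induced edges of $G^{\le 2r+1}$ on that ball, spending $e^{O(r)}$ time per vertex. Then enumerate all subsets of size at most $k$ containing $v$ that are connected in $G^{\le 2r+1}[\ball(v,R)]$ (e.g.\ by a depth-bounded DFS that extends the current subset along an edge of $G^{\le 2r+1}$), and keep only those whose smallest element equals $v$; this produces each element of $V_H$ exactly once in $O(n) \cdot e^{O(r)}$ total time, and I would store the result in a hash table keyed by the sorted tuple of $T$. Second, for each $T_1 \in V_H$, extend the BFS from $T_1$ to depth $R + 2r + 1$, enumerate all subsets of size at most $k$ lying inside $\ball(T_1, R + 2r + 1)$, retain those that appear in $V_H$ and lie at $G$-distance at most $2r+1$ from $T_1$, and add the corresponding edges of $H$. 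Each $T_1$ costs $e^{O(r)}$ time, so the edge construction also runs in $O(n) \cdot e^{O(r)}$ total.

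I do not anticipate any serious obstacle here: the argument is essentially a combinatorial counting plus routine BFS and subset enumeration. The only mild points of care are using the $v(T)$ convention (or equivalently, a hash table) to ensure that each $T \in V_H$ is enumerated only once, and observing that each per-subset connectivity/distance check costs $\poly(k) \cdot e^{O(r)} = e^{O(r)}$ time, which is absorbed into the stated bound.
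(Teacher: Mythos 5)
Your proposal is correct, but it takes a different (and more self-contained) route than the paper: the paper's proof of this lemma is a one-line citation to results of Patel--Regts and Borgs--Chayes--Kahn--Lov\'asz on counting and enumerating bounded-size connected induced subgraphs of a bounded-degree graph, applied implicitly to $G^{\le 2r+1}$, whose maximum degree is $e^{O(r)}$. You instead rederive everything from scratch: the diameter bound $R=(k-1)(2r+1)$ showing $T\subseteq \ball(v(T),R)$, the charging of each $T$ to its smallest element to get $N\le n\cdot e^{O(r)}$, the containment $T_2\subseteq \ball(T_1,R+2r+1)$ to bound $D$, and an explicit BFS-plus-enumeration construction. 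All of these steps check out (adjacency in $H$ is indeed equivalent to $\dist_G(T_1,T_2)\le 2r+1$ since $T_1,T_2$ are each connected in $G^{\le 2r+1}$), and the bounds match the paper's. What the citation buys is brevity and sharper constants in the exponent (e.g.\ the $(e\Delta')^{k-1}$-type counts of connected subgraphs containing a fixed vertex); what your argument buys is that the lemma becomes fully elementary and verifiable without consulting the references. Two cosmetic points: your generous enumeration ``all subsets of $\ball(v,R)$ of size at most $k-1$'' overcounts relative to enumerating only connected ones, but this is harmless since it stays within $e^{O(r)}$; and since \cref{thm:max-inf} claims a deterministic algorithm, you should replace the hash table with a deterministic dictionary (or just rely on the $v(T)$ convention, which already guarantees each $T$ is generated exactly once), though this affects nothing in the stated asymptotics.
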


\begin{proof}
Follows from results in \cite{PR17,BCKL13} for counting and enumerating bounded-size connected induced subgraphs in a bounded-degree graph. 
\end{proof}

\begin{lemma}[Step 2]
\label{lem:step2}
For each $T \in V_H$, its cost $c_T$, weight $w_T$, and partial assignment $\xi_T$ can be computed in $e^{O(r)} \cdot (1/\eps)^{O(1)}$ time.
\end{lemma}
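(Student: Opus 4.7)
The plan is to reduce the computation of $w_T$ and $\xi_T$ to evaluating a handful of single-site marginals of the Ising model on the bounded-size induced subgraph $G[\ball(T,r)]$, and then invoke a standard FPTAS for marginals in the uniqueness regime. The cost $c_T = |T| \le k$ is read off in $O(1)$ time. Since $|\{\pm 1\}^T| = 2^{|T|} \le 2^k = O(1)$, it suffices to compute, for each of these $O(1)$ partial assignments $\sigma_T$, a value $\psi_T(\sigma_T)$ satisfying \cref{eq:psi-approx} in $e^{O(r)} \cdot (1/\eps)^{O(1)}$ time; the quantities $\xi_T$ and $w_T$ are then obtained via the $O(1)$-sized maximization in \cref{eq:xiT,eq:wT}.

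Unrolling the definition of local influence gives
\begin{equation*}
\Phi^{(r)}(T,\sigma_T) = \sum_{v \in \ball(T,r)} a_v \bigl( \E_{G[\ball(T,r)]}[X_v \mid X_T = \sigma_T] - \E_{G[\ball(T,r)]}[X_v] \bigr),
\end{equation*}
so it is enough to approximate each of the $2|\ball(T,r)|$ single-site marginals $\Pr_{G[\ball(T,r)]}(X_v = +1 \mid X_T = \sigma_T)$ and $\Pr_{G[\ball(T,r)]}(X_v = +1)$ to within additive error $\eps'' := \eps/(8k|\ball(T,r)|)$, plug the estimates into the sum above, and define $\psi_T(\sigma_T)$ accordingly. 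Since $|a_v| \le 1$, the triangle inequality yields $|\psi_T(\sigma_T) - \Phi^{(r)}(T,\sigma_T)| \le 4|\ball(T,r)| \cdot \eps'' \le \eps/(2k)$, matching \cref{eq:psi-approx}.

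It remains to carry out each marginal approximation deterministically in time $\poly(n', 1/\eps'')$, where $n' := |\ball(T,r)| \le k\Delta^r = e^{O(r)}$. The subgraph $G[\ball(T,r)]$ inherits max degree at most $\Delta$ and the same edge couplings from the original, hence still lies in $\MM(\Delta, 1-\delta)$. By Weitz's self-avoiding walk tree reduction combined with the strong spatial mixing guaranteed by \cref{lem:SSM-TID}, truncating the SAW tree at depth $L = O(\log(1/\eps'')/\delta)$ gives a deterministic computation of each marginal up to additive error $\eps''$ in time at most $n' \cdot \Delta^L = n' \cdot (1/\eps'')^{O(\log\Delta/\delta)}$. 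Substituting $n' = e^{O(r)}$ and $\eps'' = \eps/e^{O(r)}$, each marginal costs $e^{O(r)} \cdot (1/\eps)^{O(1)}$; the total number of marginal queries across all $\sigma_T$ is $O(2^k \cdot n') = e^{O(r)}$, giving overall runtime $e^{O(r)} \cdot (1/\eps)^{O(1)}$.

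The main (rather minor) obstacle is verifying that the marginal subroutine achieves the claimed polynomial dependence on $1/\eps''$ on the family $\MM(\Delta, 1-\delta)$, rather than only a weaker quasi-polynomial bound. This is guaranteed precisely by the strong spatial mixing statement in \cref{lem:SSM-TID}: the per-vertex error decays like $(1-\delta)^L$, so depth $L = O(\log(1/\eps'')/\delta)$ suffices and the truncated SAW tree has size polynomial in $1/\eps''$. No technical machinery beyond what is already used in the preliminaries is required.
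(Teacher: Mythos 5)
Your proposal is correct and follows essentially the same route as the paper: reduce $\Phi^{(r)}(T,\sigma_T)$ to $O(|\ball(T,r)|)$ single-site marginals on the bounded-size subgraph $G[\ball(T,r)]$ (which stays in $\MM(\Delta,1-\delta)$), approximate each to additive error $\eps/\Theta(k|\ball(T,r)|)$ with a deterministic FPTAS based on Weitz's self-avoiding walk tree, and take the maximum over the $2^{|T|}\le 2^k=O(1)$ assignments. The paper simply cites the FPTAS literature for the marginal subroutine where you unpack the truncated SAW-tree argument; your error bookkeeping (tracking the factor of $2$ between $\Pr(X_v=+1)$ and $\E[X_v]$) is if anything slightly more careful than the paper's.
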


\begin{proof}
The cost $c_T$ is trivial.
For the weight $w_T$, we first compute $\psi_T(\sigma_T)$ which approximates $\Phi^{(r)}(T,\sigma_T)$, for all choices of $\sigma_T \in \{\pm 1\}^T$. 
The approximation of $\Phi^{(r)}(T,\sigma_T)$
follows from \cite{Wei06,SST14,Bar16book,PR17} which present deterministic approximate counting algorithms ($\mathsf{FPTAS}$) for high-temperature Ising models.
More specifically, observe that by definition $\Phi^{(r)}(T,\sigma_T)$ is a linear combination of marginal probabilities at each vertex in $\ball(T,r)$ either with or without the pinning $\sigma_T$.
Thus, one can estimate all such marginals within an additive error $\eps' = \frac{\eps}{4k|\ball(T,r)|}$, and then obtain $\psi_T(\sigma_T)$ from these estimates such that 
$$ \left| \psi_T(\sigma_T) - \Phi^{(r)}(T,\sigma_T) \right| \le 2|\ball(T,r)| \cdot \eps' = \frac{\eps}{2k}. $$
The running time of this is 
$$ |\ball(T,r)|^{O(1)} \cdot (1/\eps')^{O(1)} = e^{O(r)} \cdot (1/\eps)^{O(1)}. $$
Given $\psi_T(\sigma_T)$ for all $\sigma_T \in \{\pm 1\}^T$, we can then find $\xi_T$ and $w_T$. Note that the number of choices of $\sigma_T$ is at most $2^k$ which is $O(1)$.
\end{proof}

The algorithm for Step 3 is given by the following lemma.

\begin{lemma}[Step 3]
\label{lem:step3}
Let $H=(V,E)$ be an $N$-vertex graph of maximum degree at most $D \ge 3$. Suppose every vertex $T \in V$ is assigned an integral cost $c_T \in \N^+$ and a real weight $w_T \in \R$. Then for any fixed $k \in \N^+$ with $k = O(1)$, there exists an algorithm that finds a maximum weighted independent set $I$ of $H$ with total cost at most $k$ in time $O(DN) + D^{O(1)}$. 
\end{lemma}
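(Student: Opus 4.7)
The plan is to reduce the search to a small candidate pool of size $D^{O(1)}$ and then brute-force enumerate. The key observation is that because each $c_T \ge 1$, any feasible solution $I$ has $|I| \le k$, so an optimal solution contains only $O(1)$ vertices and enumeration over any candidate pool of size $D^{O(1)}$ will cost only $D^{O(1)}$ time overall.

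For the sparsification, for each cost value $c \in \{1,\ldots,k\}$ I will use linear-time selection to extract the set $C_c$ consisting of the top $kD+k$ vertices of cost exactly $c$ ranked by weight $w_T$ (this takes $O(N)$ time per $c$ and thus $O(N)$ total, since $k$ is constant). Let $C=\bigcup_{c=1}^{k} C_c$, so $|C|=O(k^2 D)$. The correctness claim is that some optimal $I^*$ is contained in $C$: take an optimal $I^*$ maximizing $|I^* \cap C|$ and suppose for contradiction $v \in I^* \setminus C$. Since $v \notin C_{c_v}$, we must have $|C_{c_v}|=kD+k$ and every vertex in $C_{c_v}$ has weight at least $w_v$. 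Among the $kD+k$ vertices of $C_{c_v}$, at most $k-1$ lie in $I^* \setminus \{v\}$ and at most $(k-1)D$ are $H$-adjacent to $I^* \setminus \{v\}$, so at least $kD+k-(k-1)-(k-1)D=D+1>0$ vertices $u \in C_{c_v}$ are neither in $I^*$ nor $H$-adjacent to $I^* \setminus \{v\}$. The swap $(I^* \setminus \{v\}) \cup \{u\}$ is then independent, has the same total cost (since $c_u=c_v$), total weight at least $w(I^*)$, and strictly larger intersection with $C$—contradicting maximality. Hence $I^* \subseteq C$.

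The enumeration phase iterates over all subsets $I \subseteq C$ of size at most $k$ (of which there are $|C|^{O(k)}=D^{O(1)}$), tests each for $H$-independence and the cost constraint $\sum_{T \in I} c_T \le k$ in $O(k^2 D)$ time via adjacency-list scans, and returns the heaviest feasible $I$. Together with the $O(DN)$ needed to touch the adjacency structure once, the total running time is $O(DN)+D^{O(1)}$ as required. The main obstacle is calibrating the sparsification: a cost-agnostic top-$O(kD)$ pruning could fail because a weight-motivated swap might exceed the cost budget, so it is essential to bucket by cost class to make swaps cost-neutral; the count $kD+k$ per bucket is then just enough to absorb the at most $(k-1)$ in-solution vertices and their at most $(k-1)D$ potential $H$-neighbors while still leaving a legal replacement available.
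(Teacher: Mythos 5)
Your proof is correct and is essentially the same argument as the paper's: bucket vertices by cost class, keep the top $k(D+1)$ by weight in each bucket, and show via the same exchange/counting argument (a solution of size at most $k$ can "block" at most $(k-1)(D+1)$ candidates) that some optimum lies in the candidate pool, then brute-force enumerate. No substantive differences.
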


\begin{proof}
For $1\le i \le k$, define $V^{(i)} = \{T \in V: c_T = i\}$ to be the set of all vertices of cost $i$. 
Let $U^{(i)} = \{T^{(i)}_1,\dots,T^{(i)}_{t_i}\} \subseteq V^{(i)}$ be the $t_i$ vertices of largest weights from $V^{(i)}$ (break ties arbitrarily), where $t_i = |U^{(i)}| = \min\{k(D+1), |V^{(i)}|\}$.
Finally, let $U = \bigcup_{i=1}^k U^{(i)}$. Observe that $U$ can be found in $O(DN)$ time.

We claim that there exists a maximum weighted independent set $I^*$ with total cost at most $k$ such that $I^*$ is completely contained in $U$. To prove the claim, let us define $I^*$ to be the maximum weighted independent set with total cost at most $k$ that contains the most vertices in $U$, and it suffices to prove $I^* \subseteq U$. 
Suppose for sake of contradiction that $I^* \not\subseteq U$. 
Take any $T \in I^* \setminus U$, and assume that $T \in V^{(i)}$ for some $i$.
Since $T \not\in U = \bigcup_{j=1}^k U^{(j)}$, we have $T \in V^{(i)} \setminus U^{(i)}$ 
and it holds $|V^{(i)}| > |U^{(i)}| = k(D+1)$. 
We say a vertex $T$ blocks a vertex $T'$ if either $T = T'$ or $T,T'$ are adjacent. 
Thus, every vertex blocks at most $D+1$ vertices. 
It follows that vertices in $I^* \setminus \{T\}$ block at most $(D+1)(k - 1) \le |U^{(i)}|-1$ vertices altogether. 
Hence, there exists a vertex $T' \in U^{(i)}$ which is not blocked by $I^* \setminus \{T\}$.
In particular, $I' = I^* \setminus \{T\} \cup\{T'\}$ is an independent set with the same cost ($c_{T'} = i = c_T$) and no smaller weight ($w_{T'} \ge w_T$ by the definition of $U^{(i)}$), while containing one more vertex from $U$. 
This is a contradiction. 

Given the claim, one only needs to enumerate all subsets of $U$ of size at most $k$ to find the maximum weighted independent set with cost constraint $k$. Since $|U| = O(D)$, this can be done in $D^{O(1)}$ time, finishing the proof.
\end{proof}

Finally, we show the correctness of our algorithm.
\begin{lemma}[Step 4]
\label{lem:step4}
Let $(S^\dagger,\sigma_{S^\dagger})$ be the maximizer for the local influence defined in \cref{eq:opt}.
We have that
\[
\Phi^{(r)}(\hat{S},\sigma_{\hat{S}}) \ge \Phi^{(r)}(S^\dagger,\sigma_{S^\dagger}) -\eps.
\]
\end{lemma}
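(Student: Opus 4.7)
The plan is to exhibit a canonical decomposition of $(S^\dagger, \sigma_{S^\dagger})$ into a feasible candidate for the weighted independent set problem \eqref{eq:I*-max}, and then chain two applications of the decomposition lemma \cref{lem:decomp} with the per-cluster approximation guarantee \eqref{eq:psi-approx}. Concretely, write $S^\dagger = T_1^\dagger \sqcup \cdots \sqcup T_m^\dagger$ as the connected components of $G^{\le 2r+1}[S^\dagger]$, and let $\sigma_{T_i^\dagger}$ denote the restriction of $\sigma_{S^\dagger}$ to $T_i^\dagger$. Each $T_i^\dagger$ is connected in $G^{\le 2r+1}$ and has size at most $|S^\dagger| \le k$, so $T_i^\dagger \in V_H$. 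Distinct components satisfy $\dist_G(T_i^\dagger, T_j^\dagger) > 2r+1$ (otherwise they would have merged in $G^{\le 2r+1}[S^\dagger]$), so by the adjacency rule (1b) the family $I^\dagger := \{T_1^\dagger, \ldots, T_m^\dagger\}$ is an independent set in $H$, and it is feasible for \eqref{eq:I*-max} since $\sum_i |T_i^\dagger| = |S^\dagger| \le k$.

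Next, I would upper-bound the target by the weight of $I^\dagger$. Using \cref{lem:decomp}, \eqref{eq:psi-approx}, the defining property \eqref{eq:wT} of $w_T$ as the maximum of $\psi_T$, and $m \le k$,
$$\Phi^{(r)}(S^\dagger, \sigma_{S^\dagger}) \;=\; \sum_{i=1}^m \Phi^{(r)}(T_i^\dagger, \sigma_{T_i^\dagger}) \;\le\; \sum_{i=1}^m \psi_{T_i^\dagger}(\sigma_{T_i^\dagger}) + \frac{m\eps}{2k} \;\le\; \sum_{i=1}^m w_{T_i^\dagger} + \frac{\eps}{2}.$$
Symmetrically, I would lower-bound $\Phi^{(r)}(\hat{S}, \sigma_{\hat{S}})$ by the weight of $I^*$. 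The key point here is that since $I^*$ is an independent set in $H$, any two distinct $T, T' \in I^*$ satisfy $\dist_G(T, T') > 2r+1$, and each $T \in I^*$ is itself connected in $G^{\le 2r+1}$; thus the elements of $I^*$ are \emph{exactly} the connected components of $G^{\le 2r+1}[\hat{S}]$. Applying \cref{lem:decomp} once more and using \eqref{eq:psi-approx}, \eqref{eq:xiT}, \eqref{eq:wT} and $|I^*| \le \sum_{T \in I^*} c_T \le k$,
$$\Phi^{(r)}(\hat{S}, \sigma_{\hat{S}}) \;=\; \sum_{T \in I^*} \Phi^{(r)}(T, \xi_T) \;\ge\; \sum_{T \in I^*} \psi_T(\xi_T) - \frac{|I^*|\eps}{2k} \;=\; \sum_{T \in I^*} w_T - \frac{|I^*|\eps}{2k} \;\ge\; \sum_{T \in I^*} w_T - \frac{\eps}{2}.$$

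To conclude, I would combine the two estimates with the optimality of $I^*$ for \eqref{eq:I*-max} against the feasible competitor $I^\dagger$, which gives $\sum_{T \in I^*} w_T \ge \sum_{i=1}^m w_{T_i^\dagger}$, and hence
$$\Phi^{(r)}(\hat{S}, \sigma_{\hat{S}}) \;\ge\; \sum_{T \in I^*} w_T - \frac{\eps}{2} \;\ge\; \sum_{i=1}^m w_{T_i^\dagger} - \frac{\eps}{2} \;\ge\; \Phi^{(r)}(S^\dagger, \sigma_{S^\dagger}) - \eps.$$
I do not anticipate any real obstacle in this argument; the only step that takes a moment of care is the identification of $I^*$ with the connected-component decomposition of $G^{\le 2r+1}[\hat{S}]$, which relies on both directions of the adjacency rule (1b), and the bookkeeping ensuring that the $\eps/(2k)$ per-cluster error incurred by \eqref{eq:psi-approx} accumulates only up to $\eps/2$ on each side because both $I^\dagger$ and $I^*$ contain at most $k$ blocks.
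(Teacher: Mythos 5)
Your proof is correct and follows essentially the same route as the paper: decompose $S^\dagger$ into the connected components of $G^{\le 2r+1}[S^\dagger]$ to obtain a feasible independent set $I^\dagger$, compare $\sum_{T\in I^*} w_T$ against $\sum_{T\in I^\dagger} w_T$ via optimality of $I^*$, and pay $\eps/2$ on each side from \eqref{eq:psi-approx}. The only cosmetic difference is that you bound $\psi_{T}(\sigma_{T})\le w_{T}$ directly for the restriction of $\sigma_{S^\dagger}$, whereas the paper introduces the per-block maximizers $\eta_T$ of $\Phi^{(r)}(T,\cdot)$ first; the two are interchangeable, and your explicit verification that $I^*$ is exactly the component decomposition of $G^{\le 2r+1}[\hat S]$ is a point the paper leaves implicit.
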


\begin{proof}
By \cref{lem:decomp}, the optimal solution $(S^\dagger,\sigma_{S^\dagger})$ corresponds to an independent set $I^\dagger$ of $H$ such that
\[
S^\dagger = \bigcup_{T \in I^\dagger} T
\qquad\text{and}\qquad
\sigma_{S^\dagger} = \left( \eta_T \right)_{T \in I^\dagger}
~\text{where}~
\eta_{T} = \argmax_{\sigma_T \in \{\pm 1\}^T} \left\{ \Phi^{(r)}(T,\sigma_T) \right\}.
\]
We then deduce that
\begin{align*}
\Phi^{(r)}(\hat{S},\sigma_{\hat{S}})
&\overset{\text{Lem \ref{lem:decomp}}}{=} \sum_{T \in I^*} \Phi^{(r)}(T,\xi_T) 
\overset{\text{\cref{eq:psi-approx}}}{\ge} \sum_{T \in I^*} \psi_T(\xi_T) - \frac{\eps}{2} 
\overset{\text{\cref{eq:wT}}}{=} \sum_{T \in I^*} w_T - \frac{\eps}{2} \\
&\overset{\text{\cref{eq:I*-max}}}{\ge} \sum_{T \in I^\dagger} w_T - \frac{\eps}{2} 
\overset{\text{\cref{eq:wT}}}{=} \sum_{T \in I^\dagger} \psi_T(\xi_T) - \frac{\eps}{2} 
\overset{\text{\cref{eq:xiT}}}{\ge} \sum_{T \in I^\dagger} \psi_T(\eta_T) - \frac{\eps}{2} \\
&\overset{\text{\cref{eq:psi-approx}}}{\ge} \sum_{T \in I^\dagger} \Phi^{(r)}(T,\eta_T) - \eps 
\overset{\text{Lem \ref{lem:decomp}}}{=} \Phi^{(r)}(S^\dagger,\sigma_{S^\dagger}) -\eps,
\end{align*}
as claimed.
\end{proof}

\section{Hardness Result}
\label{sec:hardness}

We establish computational hardness of {$k$-\infmax} for low-temperature Ising models from the hardness of estimating the marginal probabilities of single vertices, which is a direct consequence of hardness of approximate counting \cite{Sly10,SS14,GSV16} and self-reducibility.

\begin{theorem}[\cite{Sly10,SS14,GSV16}]
\label{thm:marginal}
Suppose $\Delta \ge 3$ is an integer and $\delta > 0$ is a real. 
Assuming $\mathsf{RP}\neq \mathsf{NP}$, there is no $\mathsf{FPRAS}$ for the following problem: Given an Ising model on a graph $G=(V,E)$
from the family $\MM(\Delta,1+\delta)$ and a vertex $v \in V$, estimate $\Pr(X_v = +)$.
\end{theorem}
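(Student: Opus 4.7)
The plan is to prove \cref{thm:marginal} by classical self-reducibility, bootstrapping a hypothetical marginal $\fpras$ into an $\fpras$ for the partition function $Z_{G,\beta,h}$ on $\MM(\Delta,1+\delta)$, thereby contradicting the cited hardness of \cite{Sly10,SS14,GSV16}. I would suppose for contradiction that an $\fpras$ $\mathcal{A}$ exists for estimating $\Pr(X_v=+)$ on the full family $\MM(\Delta,1+\delta)$, and then use $\mathcal{A}$ as a subroutine inside the standard telescoping scheme.

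The workhorse is the identity obtained from writing the probability of a full spin assignment as a product of conditional marginals. Fix an ordering $v_1,\ldots,v_n$ of $V$ and pick any $\tau\in\{\pm 1\}^V$; then
\[
Z_{G,\beta,h}
\;=\;
\frac{\exp\!\bigl(\sum_{uv\in E}\beta_{uv}\tau_u\tau_v + \sum_{v\in V}h_v\tau_v\bigr)}{\prod_{i=1}^n p_i(\tau)},
\qquad
p_i(\tau):=\Pr_\mu\bigl(X_{v_i}=\tau_{v_i}\,\big|\,X_{v_j}=\tau_{v_j}\text{ for all }j<i\bigr).
\]
The numerator is computable in $O(|E|+|V|)$ time. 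Each $p_i(\tau)$ is itself a single-vertex marginal in the Ising model obtained by deleting $v_1,\ldots,v_{i-1}$ from $G$ and absorbing their contributions into shifted external fields on the remaining neighbors. Deletion can only lower the maximum degree, and the surviving edge couplings $\beta_{uv}$ are unchanged, so this reduced instance still lies in $\MM(\Delta,1+\delta)$. Crucially, the family is defined purely in terms of couplings and max degree and allows arbitrary external fields, which is exactly what makes the reduction stay inside the family.

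To run the scheme without tiny probabilities blowing up the error, I would build $\tau$ greedily: at step $i$, first query $\mathcal{A}$ on the reduced instance at constant accuracy (say $1/8$) to decide whether $\Pr(X_{v_i}=+1)$ exceeds $1/2$, and fix $\tau_{v_i}$ to be whichever sign has larger marginal. This guarantees $p_i(\tau)\ge 1/3$ throughout, so each factor is $\Omega(1)$. Then re-query $\mathcal{A}$ at accuracy $\eps/(4n)$ and confidence $1-1/(4n)$ to get $\hat p_i$ with $|\hat p_i/p_i-1|\le \eps/(4n)$; a union bound yields simultaneous accuracy with probability $\ge 3/4$, and $(1\pm\eps/(4n))^n \subseteq 1\pm\eps/2$ gives a $(1\pm\eps)$-multiplicative estimate of $Z_{G,\beta,h}$ in $\poly(n,1/\eps)$ time. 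That is an $\fpras$ for the partition function on $\MM(\Delta,1+\delta)$, which is what we want to contradict.

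The only step that is not routine is verifying that the cited hardness results actually apply to some subfamily of $\MM(\Delta,1+\delta)$, so that the final contradiction bites. The antiferromagnetic Ising hardness of \cite{SS14,GSV16} establishes absence of an $\fpras$ for $Z_{G,\beta,h}$ on $\Delta$-regular graphs at couplings with $(\Delta-1)\tanh|\beta|$ strictly greater than $1$ by any fixed amount; for every $\delta>0$ one can pick $\beta$ with $1<(\Delta-1)\tanh|\beta|<1+\delta$, embedding the hard instances into $\MM(\Delta,1+\delta)$. This is the step I would pin down carefully, since everything else in the argument is a textbook self-reduction.
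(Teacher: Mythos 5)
Your proposal is correct and follows essentially the same route as the paper: a telescoping self-reduction expressing the partition function through conditional single-vertex marginals, with pinned vertices absorbed into external fields so each reduced instance stays in $\MM(\Delta,1+\delta)$, contradicting the cited hardness of approximating $Z$. Your extra steps (greedy choice of $\tau$ to keep conditionals bounded away from zero, and the explicit error/union-bound bookkeeping) are harmless refinements but not needed, since an $\fpras$ already guarantees multiplicative accuracy, which composes across the product regardless of how small the conditional probabilities are.
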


\begin{proof}
Given an $\mathsf{FPRAS}$ for estimating marginals, one can approximate the partition function efficiently. More specifically, suppose $V = \{v_1,\dots,v_n\}$ and we have
\[
\Pr(X_1=+1,\dots,X_n=+1) = \prod_{i=1}^n \Pr(X_i = +1 \mid X_1 = +1, \dots, X_{i-1}=+1).
\]
Each $\Pr(X_i = +1 \mid X_1 = +1, \dots, X_{i-1}=+1)$ corresponds to the marginal at $v_i$ in an Ising model on the subgraph induced by $\{v_i,\dots,v_n\}$ where the pinning on $\{v_1,\dots,v_{i-1}\}$ becomes external fields. Thus, we can approximate $\Pr(X_1=+1,\dots,X_n=+1)$ and hence the partition function via \cref{eq:Gibbs}.
We therefore deduce the theorem from the hardness results for computing the partition function in low-temperature Ising models \cite{Sly10,SS14,GSV16}.
\end{proof}

We now give the proof of \cref{thm:hardness}.

\begin{proof}[Proof of \cref{thm:hardness}]
We may assume without loss of generality that $\delta \le 1$.
Given a polynomial-time algorithm for $k$-{\infmax} 
for the family $\MM(\Delta,1+\delta)$ and $1$-bounded vertex weights, we show how to efficiently estimate $\Pr(X_v = +)$ for an Ising model on a graph $G=(V,E)$
from the family $\MM(\Delta,1+\delta)$ and a vertex $v \in V$. 

Define a graph $G'$ which is the disjoint union of $G$ and $k$ distinct isolated vertices $u_1,\dots,u_k$. 
Each $u_i$ has the same external field $h(u_i) = x$ which we can choose freely. 
Together with $\beta \in \R^E$ and $h \in \R^V$ this defines an Ising model on $G'$ which is still in the family $\MM(\Delta,1+\delta)$.
Let $a \in \R^V$ be a 1-bounded vertex weight vector defined by $a(v)=1$, $a(u_i)=1$ for $i=1,\dots,k$ and $a(u)=0$ for all other vertices.

Consider the $k$-{\infmax} problem for the Ising model on $G'$ and the weight vector $a$.
Let $U = \{u_1,\dots,u_k\}$ and $W = \{v,u_1,\dots,u_{k-1}\}$.
For a subset $S$ of vertices, let $\boldsymbol{+}_S \in \{\pm 1\}^S$ denote the partial assignment that assigns $+$ to all vertices in $S$.
We claim that
\begin{align}\label{eq:max-UW}
\max_{\substack{S\subseteq V \cup U,\, |S| \le k \\ \sigma_S \in \{\pm 1\}^S}} \left\{ \Phi(S,\sigma_S) \right\}
&= \max\left\{ \Phi(U,\boldsymbol{+}_U), \Phi(W,\boldsymbol{+}_W) \right\}. 
\end{align}
To see this, consider a feasible pair $(S,\sigma_S)$. 
If $S = U$, then
\begin{align}
\Phi(U,\sigma_U) 
= \sum_{i=1}^k (\sigma_{u_i} - \E[X_{u_i}]) 
\le \sum_{i=1}^k (1 - \tanh x) 
= k(1-\tanh x) = \Phi(U,\boldsymbol{+}_U). \label{eq:S=U}
\end{align}
If $S \neq U$, then without loss of generality suppose $S \cap U = \{u_1,\dots,u_j\}$ where $j \le k-1$ and we have
\begin{align}
\Phi(S,\sigma_S)
&= \E[X_v \mid X_{S \setminus U} = \sigma_{S \setminus U}] - \E[X_v] + \sum_{i=1}^j (\sigma_{u_i} - \E[X_{u_i}]) \nonumber\\
&\le 1 - \E[X_v] + (k-1)(1-\tanh x)
= \Phi(W,\boldsymbol{+}_W). \label{eq:Snot=U}
\end{align}
Therefore, \cref{eq:max-UW} follows from \cref{eq:S=U,eq:Snot=U}.

Suppose the provided algorithm returns $(\hat{S},\sigma_{\hat{S}})$ which satisfies 
\[
\Phi(\hat{S},\sigma_{\hat{S}}) 
\ge \max_{\substack{S\subseteq V \cup U,\, |S| \le k \\ \sigma_S \in \{\pm 1\}^S}} \left\{ \Phi(S,\sigma_S) \right\} - \eps
= \max\left\{ \Phi(U,\boldsymbol{+}_U), \Phi(W,\boldsymbol{+}_W) \right\} - \eps.
\] 
If $\hat{S} = U$, then we deduce from \cref{eq:S=U} that
\[
\Phi(U,\boldsymbol{+}_U)
\ge \Phi(\hat{S},\sigma_{\hat{S}})
\ge \Phi(W,\boldsymbol{+}_W) - \eps, 
\]
implying $\E[X_v] \ge \tanh x - \eps$.
If $\hat{S} \neq U$, then we deduce from \cref{eq:Snot=U} that
\[
\Phi(W,\boldsymbol{+}_W)
\ge \Phi(\hat{S},\sigma_{\hat{S}}) 
\ge \Phi(U,\boldsymbol{+}_U) - \eps,
\]
implying $\E[X_v] \le \tanh x + \eps$.
Thus, by picking $\tanh x$ and applying binary search we can estimate $\E[X_v]$ efficiently with additive error $\eps$ with high probability.
This transforms to an estimator for $\Pr(X_v = +1)$ with multiplicative error $\eps$ since $\Pr(X_v = +1)$ is lower bounded when $\delta \le 1$.
\end{proof}

\bibliographystyle{alpha}
\bibliography{ref.bib}

\end{document}